\begin{document}
{\renewcommand{\thefootnote}{\fnsymbol{footnote}}
\hfill  IGC--08/11--1\\
\medskip
\begin{center}
{\LARGE Gauge invariant cosmological perturbation equations with
corrections from loop quantum gravity}\\
\vspace{1.5em}
Martin Bojowald

\vspace{0.25em}

Institute for Gravitation and the Cosmos,
The Pennsylvania State University,\\
104 Davey Lab, University Park, PA 16802, USA\\

\vspace{0.7em}

Golam Mortuza Hossain

\vspace{0.25em}

Department of Mathematics and Statistics, University of New Brunswick,\\
Fredericton, NB E3B 5A3, Canada

\vspace{0.7em}

Mikhail Kagan

\vspace{0.25em}

Department of Science and Engineering, The Pennsylvania State
University, Abington\\
1600 Woodland Road, Abington, PA 19116, USA\\

and

Institute for Gravitation and the Cosmos,
The Pennsylvania State University,\\
104 Davey Lab, University Park, PA 16802, USA\\

\vspace{0.7em}

S.~Shankaranarayanan
%\email[]{E-mail: Shanki.Subramaniam@port.ac.uk}

\vspace{0.25em}

Institute of Cosmology and Gravitation, University
of Portsmouth, Mercantile House, Portsmouth~P01 2EG, U.K.

\vspace{1em}
\end{center}
}

\setcounter{footnote}{0}

\newcommand{\proofend}{\raisebox{1.3mm}{\fbox{\begin{minipage}[b][0cm][b]{0cm}
\end{minipage}}}}
\newenvironment{proof}{\noindent{\it Proof:} }{\mbox{}\hfill \proofend\\\mbox{}}
\newenvironment{ex}{\noindent{\it Example:} }{\medskip}
\newenvironment{rem}{\noindent{\it Remark:} }{\medskip}

\newtheorem{theo}{Theorem}
\newtheorem{lemma}{Lemma}
\newtheorem{defi}{Definition}

\newcommand{\case}[2]{{\textstyle \frac{#1}{#2}}}
\newcommand{\lP}{l_{\mathrm P}}
\newcommand{\be}{\begin{equation}}
\newcommand{\ee}{\end{equation}}
\newcommand{\bq}{\begin{eqnarray}}
\newcommand{\eq}{\end{eqnarray}}

\newcommand*{\R}{{\mathbb R}}
\newcommand*{\N}{{\mathbb N}}
\newcommand*{\Z}{{\mathbb Z}}
\newcommand*{\Q}{{\mathbb Q}}
\newcommand*{\C}{{\mathbb C}}

\newcommand{\md}{{\mathrm{d}}}
\newcommand{\tr}{\mathop{\mathrm{tr}}}
\newcommand{\sgn}{\mathop{\mathrm{sgn}}}
\newcommand{\pd}[2][]{\frac{\partial #1}{\partial #2}}
\def\f{\frac}
\def\t{\tilde}
\def\H{{\mathscr H}}
\def\h{{\cal H}}
\def\l{{\cal L}}
\newcommand{\D}{{\partial}}

\begin{abstract}
 A consistent implementation of quantum gravity is expected to change
 the familiar notions of space, time and the propagation of matter in
 drastic ways. This will have consequences on very small scales, but
 also gives rise to correction terms in evolution equations of modes
 relevant for observations. In particular, the evolution of
 inhomogeneities in the very early universe should be affected. In
 this paper consistent evolution equations for gauge-invariant
 perturbations in the presence of inverse triad corrections of loop
 quantum gravity are derived. Some immediate effects are pointed out,
 for instance concerning conservation of power on large scales and
 non-adiabaticity. It is also emphasized that several critical
 corrections can only be seen to arise in a fully consistent treatment
 where the gauge freedom of canonical gravity is not fixed before
 implementing quantum corrections. In particular, metric modes must be
 allowed to be inhomogeneous: it is not consistent to assume only
 matter inhomogeneities on a quantum corrected homogeneous background
 geometry.
 In this way, stringent consistency conditions arise for possible
 quantization ambiguities which will eventually be further constrained
 observationally.
\end{abstract}

\section{Introduction}

General relativity describes the structure and dynamics of space-time
by Einstein's equation for the space-time metric, which gives rise to
a wide range of phenomena in cosmology and astrophysics. In several
regimes, especially at high densities, one expects quantum gravity to
be crucial and to provide important correction terms to the classical
equations. By now, canonical quantum gravity in the loop approach has
progressed to the extent that perturbative calculations for the
behavior of inhomogeneities around Friedmann--Robertson--Walker
space-times can be performed at an effective level. This provides
cosmological applications, but by addressing the anomaly problem it
also sheds light on fundamental aspects such as quantum space-time
structure.

The classical equations of motion constitute an overdetermined set,
whose consistency is ensured by general covariance. When attempting to
include correction terms in these equations as they may be suggested
by quantum gravity, the consistency conditions must be respected. Only
an anomaly-free quantization, where consistency resulting from general
covariance is preserved, can lead to quantum equations of motion with
the correct number and behavior of degrees of freedom.

This issue becomes pressing already at the level of linearized
equations as they are used in cosmology of the early universe. Once
inhomogeneities are included as perturbations around an expanding
Friedmann--Robertson--Walker space-time, there are more equations than
unknowns which requires consistent forms of all terms in the
equations. Without inhomogeneities, consistency is automatically
satisfied: there is a single constraint which is always preserved by
the evolution it generates. Quantum corrections to homogeneous models
can therefore be implemented easily as it has been done often to
suggest diverse effects and scenarios \cite{LivRev}. However, just
putting quantum corrected solutions for a homogeneous background into
classical perturbation equations in general results in inconsistent
equations: the corrected background equations can no longer be
compatible with all the terms in the perturbation equations. Thus, a
direct treatment of inhomogeneities and the corrections they acquire
in a quantum theory of gravity is required.

Perturbations around homogeneous models then test their robustness,
demonstrate whether a particular form of quantum corrections can be
realized in a generally covariant way, and provide consistent sets of
equations of motion whose solutions can be analyzed for the
phenomenology and potentially observable effects they imply. From the
perspective of canonical quantum gravity, the consistency issue of
effective equations has been described in \cite{ConstraintAlgebra},
and it has been demonstrated that there is a correction expected from
loop quantum gravity \cite{Rov,ALRev,ThomasRev} which non-trivially
changes the classical equations in a consistent way at a perturbative
level.
(This correction results from inverse triad operators in
Hamiltonians \cite{QSDI,QSDV}. Other expected corrections from loop
quantum gravity, such as holonomy corrections, have not yet resulted
in consistent equations outside the reduced setting of homogeneous
models.)
In this paper, we derive the corresponding gauge invariant
perturbations and the equations of motion they satisfy.

Earlier work \cite{InhomEvolve,HamPerturb} had already led to quantum
corrections to Einstein's equation governing linear cosmological
perturbations. As a result, enhancement effects of quantum corrections
during long cosmic evolution times were suggested based on the
observation that super-horizon curvature perturbations were not
preserved, unlike classically, but had a growing mode. Other forms of
perturbation equations were used, for instance, in
\cite{ScaleInvLQC,CCPowerLoop,SuperInflLQC,Cosmography,HolonomyInfl}.
Such effects may lead to observable imprints in the cosmic background
radiation even though individual quantum-gravitational correction
terms at sub-Planckian densities are small.  The consideration,
however, was restricted to the scalar mode in the longitudinal gauge,
i.e.\ diagonal metric perturbations. While classically this procedure
proves to be equivalent to a non-gauge fixed derivation, it leads to
inconsistencies at the effective level in the presence of quantum
corrections. Specifically, the effective equations resulting from the
quantization of a gauge-fixed system in general are incompatible with
each other, and gauge-fixing eliminates the freedom required to see
systematically how the terms of consistent equations must be arranged.
A non-gauge fixed treatment is thus necessary to evaluate all
consistency conditions and to determine the gauge-invariant equation
of motion for the curvature perturbation. (Similarly, if one uses only
matter perturbations on a homogeneous gravitational background one is
implicitly using a gauge-fixed treatment. This is in general
inconsistent if no care is taken concerning the specific correction
terms and the meaning of matter perturbations in relation to
gauge-invariant quantities. Sometimes a ``separate universe'' picture
\cite{SeparateUniverse,SeparateUniverseII,SeparateUniverseIII} is
used, arguing that at least for large scale modes quantum corrections
to the homogeneous background equations should be sufficient to
determine the evolution of inhomogeneities. But also here, as we will
see in examples later, not all features will be visible based solely
on homogeneous models without a full consistency analysis once quantum
gravity corrections are included.)

A simple counting shows that the three independent
functions describing scalar (gravity and matter) perturbations are
subject to five equations of motion (see the Table \ref{TabEq}). In general,
such a system would be over-determined unless the equations are not
independent. In gravity, they can be split into two types: i)
evolution equations of second order in time derivatives and ii)
constraint equations of lower order. Constraints restrict the initial
data, and if they are preserved under evolution the system of
equations is consistent. This is guaranteed automatically when
equations of motion are obtained variationally from a covariant
action, and is therefore satisfied for the classical equations. If
quantum corrections are derived in a Hamiltonian approach, however,
consistency is ensured only if the quantization is anomaly-free. The
consistency of the resulting equations is thus tightly related to the
closure of the constraint algebra. While the algebra of the
constraints obtained from the classical Einstein-Hilbert action by a
Legendre transformation is closed or, in Dirac's terms
\cite{DirQuant}, the Hamiltonian and diffeomorphism constraints are of
first class, this property may not sustain quantization. A consistent
gauge invariant formulation of quantum-corrected equations of motion
is possible only if the quantization is anomaly free, i.e.\ if the
constraints remain first class.

As shown in \cite{vector,tensor}, standard loop quantization under
very mild assumptions leads to a non-anomalous constraint algebra for
vector and tensor modes. (At the linear level mode decomposition does
not interfere with quantization, and quantum corrections to scalar,
vector and tensor modes can be studied independently.)  In
\cite{ConstraintAlgebra} it was analyzed what types of (non-anomalous)
quantum corrections are allowed for the scalar mode, obtaining the
anomaly-freedom conditions ensuring a first class system. Once closure
of the constraint algebra is provided, the formulation of gauge
invariant equations of motion becomes possible, as developed in this
paper.

We start with reviewing the correspondence between the canonical and
covariant equations of motion, then derive the gauge invariant
variables and finally obtain the gauge invariant quantum corrected
linear Einstein equations. As we will see, consistency requires
certain features of the corrected perturbation equations and of the
gauge invariant variables, which could not be seen in gauge-fixed
formulations. Several immediate consequences are discussed in the
Sec.~\ref{s:Qualitative} and further in the conclusions, which also
exhibit the final quantum corrected perturbation equations. Readers
interested primarily in applications may turn directly to these
sections.

\section{Basic variables and equations}

When a classical theory is quantized, the choice of basic variables
often matters. While there are many equivalent formulations of
classical physics, all related to each other by canonical
transformations, such maps are rarely implementable as exact unitary
transformations when quantized. This gives rise to different
inequivalent quantizations of the same classical theory, and it can
even prevent one from constructing a quantization in a particular
classical formulation of the theory: there may be no Hilbert space
representation where a certain choice of basic classical phase space
variables will become well-defined operators.

In loop quantum gravity \cite{Rov,ALRev,ThomasRev}, the principle of
background independence, which requires that well-defined operators do
not refer to a metric other than the physical one to be turned into
operators, distinguishes a special class of basic variables. In field
theories such as general relativity, it is not field values at single
points which can become well-defined operators, but only ``smeared''
versions obtained after integrating over spatial regions. Such
integrations ensure that the operator-valued distributions, which
field values would correspond to, become well-defined operators which
can be multiplied to construct composite operators from them. Since
the physical fields of canonical general relativity are spatial tensor
fields, they cannot directly be integrated in a coordinate independent
manner. Moreover, integration measures are not provided automatically
because only the physical metric could be used on a curved manifold,
but this metric itself is being turned into an operator. If one uses
connection variables and densitized vector fields as canonical
objects, however, their transformation properties ensure that they can
be integrated over curves and surfaces, respectively, without
requiring any additional integration measure. The resulting holonomies
and fluxes then become well-defined operators in the quantum
representation underlying loop quantum gravity \cite{LoopRep}.

This representation has characteristic properties which are implied by
the choice of basic fields and their smearing. In particular,
operators for spatial geometry such as the fluxes themselves or areas
and volumes acquire discrete spectra \cite{AreaVol,Area,Vol2}. This,
in turn, determines how these basic operators can appear in composite
ones such as Hamiltonians \cite{RS:Ham,QSDI,QSDV}. For instance, flux
operators having discrete spectra containing the eigenvalue zero do
not possess densely defined inverse operators. Since inverse triad
components appear in Hamiltonians, the lack of a direct quantization
entails quantum corrections in any effective Hamiltonian (see e.g.\
\cite{QuantCorrPert}), which will then also change the corresponding
evolution as well as gauge properties. In this paper, we derive such
equations precisely for corrections resulting from inverse triad
components.

\subsection{Perturbed variables}

To do so, we perform the perturbation analysis of inhomogeneities in
the basic variables underlying loop quantum gravity, such that we will
be using primarily a densitized triad $E^a_i$ instead of the spatial
metric $q_{ab}$ (satisfying $E^a_iE^b_i=q^{ab}\det q$). Moreover, in
this canonical setting the remaining components $N$ and $N^a$ of the
space-time metric
\begin{equation}\label{CanonMetric}
 \md s^2= -N^2\md t^2+ q_{ab}(\md x^a+N^a\md t) (\md x^b+N^b\md t)
\end{equation}
will not be dynamical but play the role of Lagrange multipliers of
constraints. In fact, their time derivatives do not appear in the
Einstein--Hilbert action, which can be written in the canonical form
\begin{equation}\label{action}
 S_{\rm EH}=  \int \md t\left[ \left(  \frac{1}{8\pi G}\int\md^3x
\dot{K}_a^iE^a_i \right) -G_{\rm grav}[\Lambda^i] -D_{\rm grav}[N^a]
  -H_{\rm grav}[N]
 \right]
\end{equation}
where $K_a^i$ is conjugate to $E^a_i$,
\begin{equation}
 \{K_a^i(x),E^b_j(y)\}= 8\pi G\delta_a^b\delta^i_j\delta^3(x,y)\,,
\end{equation}
and related to extrinsic curvature $K_{ab}$ by
$K_a^i=K_{ab}E^b_i/\sqrt{|\det E|}$.
The remaining terms are the diffeomorphism constraint
\begin{equation} \label{DiffeoConstraint}
D_{\rm grav}[N^a] = \frac{1}{8\pi G}\int_{\Sigma}\mathrm{d}^3xN^a
\left((\partial_a K_b^j - \partial_b K_a^j)E^b_j - K_a^j
\partial_b E^b_j \right)
\end{equation}
and the Hamiltonian constraint
\begin{equation} \label{HamConstClass}
H_{\rm grav}[N] = \frac{1}{16\pi G} \int_{\Sigma}\mathrm{d}^3x N
\frac{E_i^a E^b_j}{\sqrt{|\det
E|}}\left(F_{ab}^k{\epsilon^{ij}}_{k}-2(1+\gamma^{2})
K_a^{[i}K_b^{j]}\right)\,.
\end{equation}
Here,
\[
F_{ab}^k=2\D_{[a}(\Gamma+\gamma K)_{b]}^k+{\epsilon_{ij}}^{k}
(\Gamma+\gamma K)_a^i (\Gamma+\gamma K)_b^j
\]
is the curvature of the Ashtekar--Barbero connection
\cite{AshVar,AshVarReell} $A_a^i=\Gamma_a^i+\gamma K_a^i$ defined in
terms of the spin connection
\begin{equation}
 \Gamma_a^i=-\frac{1}{2}\epsilon^{ijk} E_j^b \left(\partial_a
E_b^k-\partial_b E_a^k+E_k^c E_a^l \partial _c E_b^l - E_a^k
\frac{\partial_b (\det E)}{\det E}\right)
\end{equation}
and $\gamma$ is the Barbero--Immirzi parameter
\cite{AshVarReell,Immirzi}.
These variables also appear in the Gauss constraint
\[
 G_{\rm grav}[\Lambda^i]= \int\md^3x \Lambda^i(\partial_a E^a_i+
\epsilon_{ijk} \Gamma_a^jE^a_k+ \gamma \epsilon_{ijk} K_a^jE^a_k)
\]
in (\ref{action}). This constraint will be solved explicitly by our
parameterization of variables at the linear level, and its gauge will
be fixed by a background triad. We can thus ignore it from now on.

For a perturbed metric of the form
\begin{equation}\label{MetricPert} \md s^2 =
a^2(\eta)\left(-(1+2\phi)\md \eta^2 +2\D_a B\md \eta \md x^a +
((1-2\psi)\delta_{ab}+2\D_a\D_b E)\md x^a\md x^b\right)\,,
\end{equation}
as it describes general scalar perturbations $(\phi,\psi,E,B)$ around
spatially flat Friedmann--Robertson--Walker models in a general gauge,
the background and perturbed triad in
\begin{equation} \label{TriadPert}
 E^a_i= \bar{E}^a_i+\delta E^a_i
\end{equation}
are given by
\begin{equation}\label{Triad}
\bar E_i^a=\bar p \delta_i^a \equiv a^2 \delta_i^a\quad,
\quad \delta E_i^a = -2\bar p \psi \delta_i^a +\bar p(\delta_i^a
\Delta-\D^a\D_i)E,
\end{equation}
where $\Delta$ is the Laplace operator on flat space, indices
run from 1 to 3, $a$ is the (background) scale factor and $\psi$
and $E$ describe the spatial part of the perturbed metric.
(We use standard notations where $E$ is one of the scalar modes
distinct from the full densitized triad $E^a_i$. The latter will
always be written with indices such that no confusion should arise.)
The other
two scalar metric perturbations $\phi$ and $B$ are related to the
perturbed lapse function and shift vector respectively
\begin{equation}\label{PertLapseShift}
\delta N=\bar N \phi \quad,
\quad N^a=\D^a B
\end{equation}
and will enter the extrinsic curvature components, also perturbed as
\begin{equation}\label{KPert}
 K^i_a=\bar{K}_a^i+\delta K_a^i =\bar{k}\delta_a^i+\delta K_a^i\,.
\end{equation}
This splitting, with the condition that $\delta E^a_i$ and $\delta
K_a^i$ do not have homogeneous modes in order to avoid double
counting, results in Poisson brackets
\begin{equation}\label{BasicPBGrav}
\left\{\bar k,\bar p\right\}=\f{8 \pi G}{3V_0}\quad, \quad
\left\{ \delta K_a^i(x) , \delta E_j^b(y) \right\}=
8\pi G \delta^i_j \delta_a^b\delta^3(x-y) \,.
\end{equation}

The homogeneous mode is defined by
\begin{equation}
 \bar{p}= \frac{1}{3V_0} \int E^a_i\delta^i_a\md^3x\quad,\quad
\bar{k}= \frac{1}{3V_0} \int K_a^i\delta_i^a\md^3x
\end{equation}
where we integrate over a bounded region of coordinate size
$V_0=\int\md^3x$ which could be over the whole space if it is compact,
or a sufficiently large region encompassing all the scales of
perturbations of interest. Although $V_0$, which depends on
coordinates as well as the choice we make for the integration region,
enters the definition of variables and their Poisson structure,
particular correction terms for observables will not depend on its
value.  Using the homogeneous modes, perturbations are defined by
(\ref{TriadPert}) and (\ref{KPert}).

The specific form of $\bar{K}_a^i$ and $\delta K_a^i$ in relation to
time derivatives of $\bar{p}$ and $\delta E^a_i$, analogously to the
triad fields in (\ref{Triad}), follows from the equations of
motion. Before deriving these relations we thus introduce the quantum
corrections we consider because they have a bearing on the form of
components of $K_a^i$.

\subsection{Quantum corrections}

The Hamiltonian constraint (\ref{HamConstClass}) contains a factor of
an inverse determinant of the densitized triad. This inverse cannot be
quantized directly because the integrated determinant itself is
quantized to an operator with zero in the discrete spectrum,
precluding the existence of an inverse operator. Nevertheless,
well-defined operators quantizing (\ref{HamConstClass}), including the
inverse triad, exist \cite{QSDI,InvScale}. However, the behavior of
expectation values of the operators differs on small length scales
from the classical behavior even in semiclassical states, which
implies the presence of a correction function $\alpha$ multiplying the
Hamiltonian density. This function must be scalar (of density weight
zero) to ensure the proper behavior of the integral, and it can depend
functionally on all the phase space variables in possibly non-local
ways. Such a general dependence would make an analysis of the
constraint algebra and of equations of motion intractable, and so we
have organized the calculations in \cite{ConstraintAlgebra} by first
assuming a primary correction function $\alpha(E^a_i)$ which depends
only on the triad and does so only in algebraic form. By itself, this
does not produce anomaly-free quantizations, which however do exist if
additional counter-terms are added containing new correction functions
whose relation to the primary correction is fixed by
anomaly-cancellation. These extra terms can be interpreted as arising
from a more complicated dependence of $\alpha$ on all the phase space
variables, which is derived systematically by this process.

Specifically, the quantum corrected Hamiltonian constraint derived in
\cite{ConstraintAlgebra} can conveniently be written as
\begin{equation}\label{H2Q}
H^Q = H^Q_{\rm grav} [\bar N] + H^Q_{\rm grav} [\delta N]
+H^Q_{\rm matt} [\bar N]+H^Q_{\rm matt} [\delta N],
\end{equation}
where the gravitational part is expanded by powers of inhomogeneities
$\delta E^a_i$, $\delta K_a^i$ and $\delta N$ as
\begin{eqnarray} \label{HamGravQ}
H_{\rm grav}^Q[\bar{N}] &:=& \frac{1}{16\pi G} \int\mathrm{d}^3x
\bar{N}\left[ \bar{\alpha}{\mathcal H}^{Q(0)} +
\alpha^{(2)}{\mathcal H}^{Q(0)} +
\bar{\alpha}{\mathcal H}^{Q(2)}\right], \nonumber\\
H_{\rm grav}^Q[\delta N] &:=& \frac{1}{16\pi G}\int\mathrm{d}^3x
 \delta N \left[\bar{\alpha}{\mathcal H}^{Q(1)}\right] ~,
\end{eqnarray}
and the matter Hamiltonian reads
\begin{eqnarray}\label{HamMatterQ}
 H^Q_{\rm matter}[\bar N]&=& \int_{\Sigma} \mathrm{d}^3x
\bar{N}\left[\left(\bar\nu\h_\pi^{Q(0)}+
\h_\varphi^{Q(0)}\right)+\left(\nu^{(2)}\h_\pi^{Q(0)}+
\bar\nu\h_\pi^{Q(2)}+\bar\sigma\h_\nabla^{Q(2)}+\h_\varphi^{Q(2)}\right)\right]~\nonumber\\
H_{\rm matter}^Q[\delta{N}] &=& \int\mathrm{d}^3x \delta N \left[
\bar{\nu}{\mathcal H}_\pi^{Q(1)} + {\mathcal H}_\varphi^{Q(1)}
\right] ~.
\end{eqnarray}
In all expansions, a bar is used to denote background quantities while
superscripts indicate the inhomogeneous order.  The Hamiltonian
densities in the expansions are given by
\cite{ConstraintAlgebra}
\begin{eqnarray} \label{HamGravQDens}
{\mathcal H}^{Q(0)} &=& -6\bar{k}^2\sqrt{\bar p}~,\nonumber\\
{\mathcal H}^{Q(1)} &=& -4(1+f) \bar{k}\sqrt{\bar{p}}
\delta^c_j\delta K_c^j -(1+g)\frac{\bar{k}^2}{\sqrt{\bar{p}}}
\delta_c^j\delta E^c_j +\frac{2}{\sqrt{\bar{p}}}
\partial_c\partial^j\delta E^c_j  ~,
\nonumber\\  {\mathcal H}^{Q(2)} &=& \sqrt{\bar{p}} \delta
K_c^j\delta K_d^k\delta^c_k\delta^d_j - \sqrt{\bar{p}} (\delta
K_c^j\delta^c_j)^2 -\frac{2\bar{k}}{\sqrt{\bar{p}}} \delta
E^c_j\delta K_c^j
\\
&& \quad -\frac{\bar{k}^2}{2\bar{p}^{3/2}} \delta E^c_j\delta
E^d_k\delta_c^k\delta_d^j +\frac{\bar{k}^2}{4\bar{p}^{3/2}}(\delta
E^c_j\delta_c^j)^2 -(1+h)\frac{\delta^{jk}
}{2\bar{p}^{3/2}}(\partial_c\delta E^c_j) (\partial_d\delta E^d_k)
~.\nonumber
\end{eqnarray}
for gravity and by
\begin{eqnarray}\label{HamMatterQDens}
{\mathcal H}_\pi^{Q(0)} &=&
\frac{\bar{\pi}_{\bar\varphi}^2}{2\bar{p}^{3/2}} \quad,\quad
{\mathcal H}_\nabla^{Q(0)} = 0 \quad,\quad {\mathcal
H}_\varphi^{Q(0)} = \bar{p}^{3/2} V(\bar{\varphi}) ~,
\nonumber\\{\mathcal H}_\pi^{Q(1)} &=& (1+f_1)\frac{\bar{\pi}
\delta{\pi}}{\bar{p}^{3/2}}
-(1+f_2)\frac{\bar{\pi}^2}{2\bar{p}^{3/2}} \frac{\delta_c^j
\delta E^c_j}{2\bar{p}}\nonumber\\
{\mathcal H}_\nabla^{Q(1)} &=& 0\\
 {\mathcal H}_\varphi^{Q(1)} &=&
\bar{p}^{3/2}\left( (1+f_3)V_{,\varphi}(\bar{\varphi})
\delta\varphi +V(\bar{\varphi}) \frac{\delta_c^j \delta
E^c_j}{2\bar{p}}\right)\nonumber\\
 \h^{Q(2)}_{\pi}&=&
(1+g_1)\frac{{\delta{\pi}}^2}{2\bar{p}^{3/2}}
-(1+g_2)\frac{\bar{\pi} \delta{\pi}}{\bar{p}^{3/2}}
\frac{\delta_c^j \delta E^c_j}{2\bar{p}}
\nonumber+\frac{1}{2}\frac{\bar{\pi}^2}{\bar{p}^{3/2}} \left(
(1+g_3)\frac{(\delta_c^j \delta E^c_j)^2}{8\bar{p}^2}
+\frac{\delta_c^k\delta_d^j\delta E^c_j\delta E^d_k}{4\bar{p}^2}
\right) \nonumber\\
\h^{Q(2)}_{\nabla}&=&\frac{1}{2}(1+g_5)\sqrt{\bar{p}}\delta^{ab}\partial_a\delta
\varphi
\partial_b\delta \varphi\nonumber\\
\h^{Q(2)}_{\varphi} &=&\bar{p}^{3/2} \left[(1+g_6)\frac{1}{2}
V_{,\varphi\varphi}(\bar{\varphi}) {\delta\varphi}^2 +
V_{,\varphi}(\bar{\varphi}) \delta\varphi \frac{\delta_c^j \delta
E^c_j}{2\bar{p}}+ V(\bar{\varphi})\left( \frac{(\delta_c^j \delta
E^c_j)^2}{8\bar{p}^2} -\frac{\delta_c^k\delta_d^j\delta
E^c_j\delta E^d_k}{4\bar{p}^2} \right)\right] \,. \nonumber
\end{eqnarray}
for matter.

The $\bar{p}$-dependent functions $\bar{\alpha}$, $\bar{\nu}$ and
$\bar{\sigma}$ are primary correction functions whose origin is the
presence of inverse triad operators in a constraint operator. Their
form can be computed in isotropic models \cite{InvScale,Ambig,ICGC} or
with certain gauge assumptions for inhomogeneous states
\cite{BoundFull,QuantCorrPert}. Classically, we have
$\bar{\alpha}=\bar{\nu}=\bar{\sigma}=1$, while there can be strong
deviations from this value for small values of elementary flux
variables which quantize the densitized triad. This deep quantum
regime is difficult to control, however, and the derivations in
\cite{ConstraintAlgebra} of an anomaly-free constraint algebra are
valid for primary correction functions of the form
\begin{equation}\label{AlphaHom}
 \bar{\alpha}(a)=
 1+c_{\alpha}\left(\frac{\ell_{\rm P}^2}{a^2}\right)^{n_{\alpha}}+\cdots
\end{equation}
which are perturbative in the Planck length $\ell_{\rm
  P}=\sqrt{G\hbar}$, i.e.\ $n_{\alpha}>0$. Explicit values for
  coefficients $c_{\alpha}$, $c_{\nu}$ and $c_{\sigma}$, which are
  generically positive such that $\bar{\alpha}(a)>1$ in perturbative
  regimes, as well as the exponents $n_{\alpha}$, $n_{\nu}$ and
  $n_{\sigma}$ can be derived from specific quantizations, but they
  are subject to quantization ambiguities.\footnote{Even the isotropic
  quantization used for the background evolution is subject to
  quantization ambiguities. Uniqueness results of the quantum dynamics
  can be obtained only based on ad-hoc assumptions, and they sometimes
  occur as a result of incorrect implementations of quantization
  schemes. The source of ambiguities is the representation of
  operators, such as inverse triad operators used here, but also the
  underlying refinement behavior of a discrete state underlying the
  quantum evolution \cite{InhomLattice,CosConst}; see also
  \cite{CHRev} and the appendix of \cite{ConstraintAlgebra}.  Both
  ingredients combine to determine the values of $c_{\alpha}$ and
  $n_{\alpha}$. Note also that $c_{\alpha}$, when correctly derived
  using lattice refinement, is coordinate dependent in such a way that
  the combination with the scale factor in (\ref{AlphaHom}) is scaling
  independent.}
One purpose of deriving
anomaly-free versions of the constraints is to provide consistency
conditions among some of these values, fixing some quantization
ambiguities.

For an anomaly-free quantization in a
gauge-independent manner, the presence of these primary correction
functions requires counter-terms with coefficients $f$, $g$ and $h$ as
well as $f_i$ and $g_i$ which also depend on $\bar{p}$ in a way fixed
by anomaly cancellation conditions. For the situation under consideration
where the matter sector consists of a scalar field with a non-trivial
potential, we have
\begin{eqnarray}\label{fgh}
 2f' \bar{p} &=& -\frac{\bar{\alpha}'\bar{p}}{\bar{\alpha}}\\
g&=& -2f \label{g}\\
f_1 &=& f -\frac{\bar{\nu}'\bar{p}}{3\bar{\nu}}
 \label{f1}
%\\
%g_1 &=& -\frac{1}{3}\left(\frac{2\bar{p}}{\bar{\nu}}\bar{\nu}'
%+ 3f +
%2\bar{p}f'_1-3f_1\right)\label{g1}
\end{eqnarray}
and
\begin{equation} \label{alpha2}
\frac{\partial\alpha^{(2)}}{\partial(\delta E^a_i)} (\delta^c_j
\delta^a_i-\delta^a_j \delta^c_i) = \frac{\alpha^\prime}{3p}\delta
E^c_j, \quad \frac{\partial\nu^{(2)}}{\partial(\delta E^a_i)}
(\delta^c_j \delta^a_i-\delta^a_j \delta^c_i) =
\frac{\nu^\prime}{3p}\delta E^c_j\,.
\end{equation}
Here and in what follows, primes denote derivatives by $\bar{p}$.
Moreover, we have
\begin{equation}
 \bar{\alpha}^2=\bar{\nu}\bar{\sigma}\,.
\end{equation}
Other consistency conditions will be recalled later from
\cite{ConstraintAlgebra} (also discussed in Appendix \ref{App:Anomaly}) whenever
they are being used. Classically,
all counter-terms vanish, e.g.\ $f=f_1=g=0$. With the consistency
conditions the system of corrected constraints is anomaly-free to the
perturbative orders considered, which is linear in inhomogeneities
(requiring second order expansions of the constraints which generate
linear equations of motion) as well as leading order in the
corrections of (\ref{AlphaHom}). The latter assumption of
perturbativity implies that we ignore terms such as
$(\bar{\alpha}-1)^2$, $(p\md\bar{\alpha}/\md p)^2$ or $f^2$ compared to
$\bar{\alpha}-1$.

With the corrected Hamiltonian, we can derive the equations of motion
it generates. From $\dot{\bar{p}}=\{\bar{p},H^Q_{\rm grav}[N]\}$, for
instance, we obtain the background part
$\bar{K}_a^i=\bar{k}\delta_a^i$ of extrinsic curvature where $\bar{k}$
is related to the conformal Hubble parameter by
\be\label{HubbleK} \bar \alpha \bar k = \H
\equiv \f{\dot{\bar p}}{2\bar{p}}\,.
\ee

The choice of the background lapse function $\bar{N}=a$, used to
derive (\ref{HubbleK}), corresponds to the conformal time $\eta$ whose
derivative we denote by a dot.  In general, the total time derivative
of an arbitrary phase space function is given by its Poisson bracket
with $H^Q[N]+D[N^a]$ parameterized by the total lapse $N\equiv\bar N
+\delta N$ and shift $N^a\equiv\bar N^a +\delta N^a$. Nonetheless, for
a background quantity the Poisson bracket above, using only
$H^Q[\bar{N}]$, coincides with the conformal (background) time
derivative up to the second perturbative order.

Similarly, the form of the perturbation $\delta K_a^i$ can be deduced
from Hamilton's equation for $\delta \dot E_i^a$. Namely, using
\begin{equation} \label{deltaEdot}
\delta \dot E_i^a \equiv \{\delta E_i^a, H^Q[N]+D[N^a]\}
\end{equation}
along with (\ref{Triad}) and (\ref{HubbleK}), we obtain
\begin{equation}\label{deltaK}
\bar{\alpha}\delta K^i_a=-\delta^i_a\left[\dot \psi+\H(\psi
+\phi(1+f))\right]+\D_a\D^i\left[\H E-(B-\dot E)\right],
\end{equation}
where the counter-term $f(\bar p)$ appears.

Matter is represented by a scalar field $\varphi=\bar
\varphi+\delta \varphi$ with potential
$V(\varphi)$ and its conjugate momentum $\pi=\bar\pi+\delta\pi$.
As before, we use the equations of motion
\begin{equation} \label{phibardot}
\dot{\bar \varphi} \equiv \{\bar\varphi, H^Q[N]+D[N^a]\},\quad \delta \dot
\varphi \equiv \{\delta \varphi, H^Q[N]+D[N^a]\}
\end{equation}
to express the field momentum as
\begin{equation}\label{deltapi}
\bar\pi=\dot{\bar\varphi}\f{\bar p}{\bar\nu},\quad
\delta\pi=\f{\bar{p}}{\bar\nu}\left(\left(\delta\dot\varphi-
\dot{\bar\varphi}(1+f_1)\phi\right)(1-g_1)+\dot{\bar\varphi}\f{\delta
E_i^a\delta_a^i}{2\bar{p}}\right)\,.
\end{equation}

Before proceeding to gauge transformations in the next section we note
the relation between the canonical and covariant equations of motion,
summarized in Table \ref{TabEq}.
\begin{table}
\begin{center}
\begin{tabular}{|c|c|}
\hline
{\mbox{{\bf Covariant Equations}}} & \mbox{{\bf Canonical Equations}}\\
\hline%
 Background Friedmann&  Background Hamiltonian Constraint \\\hline%
Background Raychaudhuri& $\dot{\bar k}$ \& $\dot{\bar p}$  \\\hline%
Background Klein-Gordon& $\dot{\bar \varphi}$ \& $\dot{\bar \pi}$ \\\hline%
Perturbed Einstein $_T^T$& Perturbed Hamiltonian Constraint   \\\hline%
Perturbed Einstein $_T^S$&  Perturbed Diffeomorphism Constraint  \\\hline%
Perturbed Einstein $_S^S$&   $\delta \dot K$ \& $\delta \dot E$ \\ \hline%
Perturbed Klein-Gordon&   $\delta \dot\varphi$ \& $\delta \dot\pi$ \\%
\hline
\end{tabular}
\caption{Table of correspondence between the background and perturbed
canonical equations and the covariant equations of general
relativity. In the covariant column, the subscripts `S' and `T' stand
for spatial and temporal components respectively. In the canonical
framework, equations are of two types: constraint equations and
dynamical (Hamilton's) equations for the time derivatives of canonical
pairs. Note that both `SS'- and `$\delta \dot K$ \& $\delta \dot
E$'-equations are tensorial. Indices of $\delta K_a^i$ and $\delta
E_i^a$ have been suppressed for simplicity.\label{TabEq}}
\end{center}
\end{table}
There are three background equations, only
two of which are independent, for the two unknown functions: scale
factor $a(\eta)$ and matter scalar field $\bar \varphi(\eta)$
depending on the conformal time $\eta$. Those are the Friedmann,
Raychaudhuri and Klein-Gordon equations  \cite{QuantCorrPert}
\begin{eqnarray}
\H^2&=&\f{8\pi G}{3}\bar{\alpha}\left(\f{\dot{\bar
\varphi}^2}{2\bar\nu}+\bar{p}V(\bar\varphi)\right)\label{FriBG},\\
\dot\H&=&\H^2\left(1+\f{\bar\alpha^\prime \bar p}{\bar\alpha}\right)-4\pi
G\f{\bar\alpha}{\bar\nu}\dot{\bar\varphi}^2\left(1-\f{\bar\nu^\prime
\bar p}{3\bar\nu}\right)\label{RayBG},\\
\ddot{\bar\varphi}&+&2\H\dot{\bar\varphi}\left(1-\f{\bar\nu^\prime {\bar p}}{\bar\nu}
\right)+\bar\nu
\bar p V_{,\varphi}(\bar\varphi)=0\label{KGBG},
\end{eqnarray}
where the prime indicates a derivative with respect to $\bar p$.
These equations are listed in the left column and correspond to the
background Hamiltonian constraint and two pairs of dynamical
(Hamilton's) equations. Each pair of the first order equations is to
be combined into a single second order equation.

More generally, covariant equations which are less than second order
(with respect to the conformal time derivative) correspond to
constraint equations in the canonical formalism. They should be viewed
as restrictions on the initial conditions. As mentioned earlier, for
consistent and unambiguous (gauge-invariant) evolution such
constraints must be preserved by the dynamical equations. In the
canonical language, this property of constraints is translated into
the requirement of closure of the constraint algebra, as it is
analyzed for the corrected constraints in \cite{ConstraintAlgebra}. As
a result, relations such as (\ref{fgh}), (\ref{f1}), (\ref{g}) and
(\ref{alpha2}) between the correction functions have to be satisfied
as conditions for higher order terms of primary correction functions.

\section{Gauge transformations}

In classical relativity, it is the Lie derivative which provides the
form of gauge transformations of the fields such as metric components,
corresponding to changes of coordinates. After quantization it is no
longer clear what the analog of these gauge transformations should be,
given that the underlying space-time notion would have to be
determined from the quantum theory itself. In most approaches to
quantum gravity, one does not expect the fundamental space-time
picture to be described by a smooth manifold. Here, an advantage of
the canonical formulation is that gauge transformations are directly
generated as Poisson brackets of the fields with the
constraints. Classically, this reproduces the formulas obtained by Lie
derivatives, and it can directly be extended to canonical quantum
gravity capturing changes to the quantum space-time structure. With
corrections (\ref{H2Q}) to the classical constraints
(\ref{HamConstClass}), it is not only equations of motion but also the
form of gauge transformations which changes. Thus, gauge invariant
combinations of the perturbations take different forms than
they do classically.

However, as we saw not all the space-time metric components are
dynamical phase space variables, and only the gauge transformations
for the spatial metric, or $E^a_i$ and $K_a^i$, will be determined
straightforwardly. In this section we derive these transformations and
show how also the transformations of the remaining components $N$ and
$N^a$, or $\phi$ and $B$ in the scalar perturbations related to the
Lagrange multipliers, can be obtained.

\subsection{Classical gauge-invariant variables}

It is instructive to introduce the canonical derivations and required
notions in the classical case first, after which we will directly
extend the expressions to those including quantum
corrections. (Canonical treatments of classical perturbations have
also been discussed in \cite{PertObsI,PertObsII,BKdustII}.)

In the covariant formulation, gauge transformations constitute
local infinitesimal coordinate transformations
\[
x^\mu\rightarrow\tilde x^\mu = x^\mu+\xi^\mu(x),
\]
generated by vector fields $\xi^\mu$. In a perturbative setting, the
infinitesimal field $\xi^{\mu}$ will be treated as a first order
perturbation. Under this coordinate transformation any tensor field
receives a correction equal to its Lie derivative along $\xi^\mu$.
The part of the transformation relevant for the scalar mode can be
parameterized by two scalar functions $\xi^0$ and $\xi$ such that
\[
\xi^\mu=(\xi^0,\D^a\xi)
\]
where $a$ indicates a spatial direction 1, 2 or 3.

There are four (spatial) scalar perturbations in a space-time
metric, $\phi$, $\psi$, $B$ and $E$ as they appear in the line
element (\ref{MetricPert})
%\begin{equation}\label{MetricPert}
% \md s^2= a^2\left(-(1+2\phi)\md\eta^2+ 2B,_{i}\md\eta\md x^i+
%((1-2\psi)\delta_{ij}+ 2E,_{ij})\md x^i\md x^j\right)
%\end{equation}
in conformal time $\eta$. These perturbations are subject to two
independent gauge transformations by $\xi^0$ and $\xi$. We now briefly
recall how these transformations follow from changes of coordinates to
verify later that the classical canonical transformations produce the
correct form.

If only $\xi^0$ is non-zero, the coordinate transformation changes
$\eta$ to $\eta+\xi^0$, which for $\md\eta^2$ implies, to first order
in $\xi^0$,
\[
 \md(\eta+\xi^0)^2=\md\eta^2+2\dot{\xi}^0\md\eta^2+ 2\xi^0,_a \md\eta\md x^a
\]
and $a(\eta)^2$ changes to $a(\eta)^2(1+2\dot{a}\xi^0/a)$. Inserting
this in (\ref{MetricPert}), we read off the transformation formulas
\begin{equation}
 \phi\mapsto
\phi+\dot{\xi}^0+\H\xi^0\quad,\quad \psi\mapsto
\psi-\H\xi^0\quad,\quad B\mapsto B-\xi^0
\quad,\quad E\mapsto E
\end{equation}
where $\H=\dot{a}/a$ as in (\ref{HubbleK}).

If only $\xi$ is non-zero, $\md x_a\md x^a$ changes to
\[
 \md(x_a+\xi,_a)\md(x^a+\xi,^a)=
 \md x_a\md x^a+ 2\dot{\xi},_a\md\eta\md x^a+ 2\xi,_{ab}\md x^a\md x^b
\]
which yields
\begin{equation}
 \phi\mapsto \phi\quad,\quad \psi\mapsto
\psi\quad,\quad B\mapsto B+\dot{\xi}  \quad,\quad E\mapsto
E+\xi\,.
\end{equation}

We thus see that $B-\dot{E}$ is invariant under $\xi$-transformations
(or spatial diffeomorphisms) and changes to $B-\dot{E}-\xi^0$ under
$\xi^0$-transformations. Thus, the Bardeen variables \cite{Bardeen}
\begin{equation} \label{Bardeen}
 \Phi=\phi+ \H(B-\dot{E})+(B-\dot{E})^\cdot \quad\mbox{ and }\quad
\Psi=\psi-\H(B-\dot{E})
\end{equation}
are gauge invariant.
For a scalar field $\varphi$, the only change is under $\xi^0$ and
given by $\delta\varphi+\dot{\bar{\varphi}}\xi^0$. Here,
\begin{equation}
 \delta\varphi^{\rm GI}=\delta\varphi+\dot{\bar{\varphi}}(B-\dot{E})
\end{equation}
is gauge invariant.

In the canonical formulation, gauge transformations are generated by
the Hamiltonian and diffeomorphism constraints. The corresponding
lapse function and shift vector to be inserted are also first order
perturbations related to the infinitesimal vector field $\xi^\mu$ via
\be\label{LapseShift_Def_Cl} \delta N = \bar N\xi_0, \quad \delta
N^a = \D^a \xi,
\ee
which follow from the metric decomposition (\ref{CanonMetric})
%\be\label{Metric_Decomp}%
% \md s^2=-N^2 \md t^2 + q_{ab}(N^a \md t +\md x^a)(N^b \md t +\md x^b)
%\ee%
in terms of the spatial metric $q_{ab}$, lapse function $N$ and shift
vector $N^a$.  With first order smearing functions, the
gauge-generating constraints are at least of second perturbative
order. From now on we will denote the gauge transformations of a
phase space quantity $X$ as
\be\label{Gauge_Trans}
\delta_{[\xi_0,\xi]}X\equiv \{X,H^{(2)}[\bar
N\xi_0]\}+\{X,D^{(2)}[\D^a\xi]\},
\ee
where
\be\label{H2gauge} H^{(2)}[\delta N]=\frac{1}{16\pi G}
\int\mathrm{d}^3x\, \delta N[\h^{(1)}+16\pi
G(\h_\pi^{(1)}+\h_\varphi^{(1)})] \ee
with the Hamiltonian densities given by
\begin{equation} \label{PertHamConst_Grav}
{\mathcal H}^{(1)} = -4 \bar{k}\sqrt{\bar{p}} \delta^c_j\delta
K_c^j -\frac{\bar{k}^2}{\sqrt{\bar{p}}} \delta_c^j\delta E^c_j
+\frac{2}{\sqrt{\bar{p}}}
\partial_c\partial^j\delta E^c_j
\end{equation}
and
\begin{equation} \label{SFHamConstH1}
{\mathcal H}_\pi^{(1)} = \frac{\bar{\pi}
\delta{\pi}}{\bar{p}^{3/2}} -\frac{\bar{\pi}^2}{2\bar{p}^{3/2}}
\frac{\delta_c^j \delta E^c_j}{2\bar{p}} , \quad {\mathcal
H}_\nabla^{(1)} = 0 ,\quad {\mathcal H}_\varphi^{(1)} =
\bar{p}^{3/2}\left( V_{,\varphi}(\bar{\varphi}) \delta\varphi
+V(\bar{\varphi}) \frac{\delta_c^j \delta E^c_j}{2\bar{p}}\right)\,.
\end{equation}
For the diffeomorphism constraint (\ref{DiffeoConstraint}), we have
the second order term
\be\label{D2gauge} D^{(2)}[\delta N^a]=\frac{1}{8\pi
G}\int_{\Sigma}\mathrm{d}^3x\,\delta N^a
\left[\bar{p}\partial_a(\delta^d_k \delta K^k_d)
-\bar{p}(\partial_k\delta K^k_a)- \bar{k} \delta_a^k(
\partial_d \delta E^d_k)+8\pi G(\bar \pi \D_a \delta\varphi)\right].
\ee
In what follows, we perform canonical gauge transformations on the
basic phase variables and demonstrate that this, too, results in the gauge
invariant combinations (\ref{Bardeen}).

        \subsubsection{Gauge transformations of basic variables}

We start by computing gauge transformations of the basic
phase space variables: $K_a^i$, $E_i^a$, $\varphi$ and $\pi$.
Using (\ref{Gauge_Trans}) with the constraints (\ref{H2gauge}) and
(\ref{D2gauge}) we obtain \bq\label{Gauge_Trans_Explicit}
\delta_{[\xi_0,\xi]}\delta K_a^i&=&\D^i\D_a(\xi_0+\bar k \xi)-\f{\bar k^2}{2}\xi_0 \delta_a^i + 4\pi G \left[\bar p V(\bar\varphi)-\f{\dot{\bar\varphi}^2}{2}\right]\xi_0\delta_a^i,\nonumber\\
\delta_{[\xi_0,\xi]}\delta E^a_i&=&2\bar k \bar p \xi_0 \delta_i^a + \bar p
(\delta_i^a \Delta\xi-\D^a\D_i\xi),\nonumber\\
\delta_{[\xi_0,\xi]}\delta\varphi&=&\f{\bar \pi}{\bar p}\xi_0,\nonumber\\
\delta_{[\xi_0,\xi]}\delta\pi&=&\bar \pi \Delta\xi-\bar p^2
V^\prime(\bar\varphi)\xi_0
%\delta_{[\xi_0,\xi]}K_a^i&=&\D^i\D_a(\xi_0+\bar k \xi)-\f{\bar k^2}{2}\xi_0 \delta_a^i + 8\pi G \f{\delta H_m}{\delta (\delta E_i^a)},\nonumber\\
\eq for the basic gravity and matter perturbations. Note that the
background lapse function has again been set to $\bar N = \sqrt{\bar p}$ for
gauge transformations of a metric in conformal time. It is also easy
to see that when acting upon the background quantities $\bar k$, $\bar
p$, $\bar \varphi$, and $\bar \pi$ these gauge transformations will
generate only second order contributions. Hence in the equations of
motion the background phase space variables can be treated as
gauge-invariant up to the desired order.

The perturbative treatment introduces a subtlety in the interpretation
of transformations: In the unperturbed case, the gauge transformation
of a phase space function $X(K,E,\varphi,\pi)$ generated by the
diffeomorphism constraint acts as a Lie derivative
\begin{equation} \label{Lie}
\left\{X,D[\xi^a]\right\}={\mathcal L_{\vec\xi}X}
\end{equation}
along the vector field $\xi^a$. At the same time, the diffeomorphism
transformation of the perturbations is given by the $\xi$-part of
(\ref{Gauge_Trans_Explicit}), i.e.\ $\{\delta
X,D[\xi^a]\}=\delta_{[0,\xi]}\delta X$, whereas the barred quantities
remain intact at the linear level $\{\bar X,D[\xi^a]\}=O(2)$. For a
scalar field, these Poisson brackets can directly be identified as Lie
derivatives of background quantities and perturbations:
\[
{\mathcal L}_{\vec\xi}\bar \varphi=\xi^a\D_a\bar\varphi=0\quad,\quad
{\mathcal L}_{\vec\xi}\delta \varphi =\xi^a\D_a\delta\varphi=O(2)\,.
\]
However, if one computes the Lie derivatives of perturbative terms of
a tensorial object, or even of a scalar of non-zero density weight,
one can notice that they do not coincide with the $\xi$-part of
(\ref{Gauge_Trans_Explicit}). For instance,
\[
{\mathcal L}_{\vec\xi}\bar
\pi=\xi^a\D_a\bar\pi+\bar\pi\D_a\xi^a=\bar\pi\D_a\xi^a\quad,\quad
{\mathcal L}_{\vec\xi}\delta
\pi=\xi^a\D_a\delta\pi+\delta\pi\D_a\xi^a=O(2)
\]
while
\[
 \delta_{[0,\xi]}\bar{\pi}=0 \quad,\quad \delta_{[0,\xi]}\delta\pi =
\bar{\pi}\Delta\xi= \bar{\pi} \partial_a\xi^a
\]
for $\xi^a=\partial^a\xi$.  Similar discrepancies occur for the triad
and extrinsic curvature.

Nevertheless, gauge transformations are related to the Lie
derivative. As shown in App.~A of \cite{ConstraintAlgebra}, up to a
higher perturbative order we have
\be\label{Lie_Pert}
\{\bar
X,D[\xi^a]\}=[\l_{\vec\xi} X ]^{(0)}, \quad
%\{\delta X,D[\xi^a]\}^*\equiv
\{\delta X,D^{(2)}[\xi^a]\}=[\l_{\vec\xi} X]^{(1)}
\ee
for the background and perturbed parts of $X$ respectively.  (As
explained in more detail in \cite{ConstraintAlgebra}, the second
Poisson bracket takes into account the fact that perturbations $\delta
X$ do not contain zero modes.)
%  Here, the star denotes the use of a
%Dirac bracket corresponding to the second class constraints which
%arise from the mode splitting, as discussed in
%\cite{ConstraintAlgebra}.
These two equations above are consistent with the Lie
derivative of the full variable $X$
\[
\{\bar X+\delta X,D[\xi^a]\}=\l_{\vec\xi}  (\bar X+\delta X).
\]

Individual terms in this expansion, when computed either via
(\ref{Lie}) or via (\ref{Gauge_Trans}), do not agree in general
because $({\cal L}_{\vec{\xi}}X)^{(0)}$ may not equal ${\cal
L}_{\vec{\xi}}(X^{(0)})= {\cal L}_{\vec{\xi}}\bar{X}$: While the
$\xi$-gauge transformation of an unperturbed variable is equivalent to
taking a Lie derivative, the perturbation procedure breaks this
equivalence. For instance, the Lie derivative of a background quantity
(being linear) contributes to the gauge-transformation of its
perturbation, not of the background quantity itself. At the same time,
the Lie derivative of a linear perturbation (being at least a
quadratic quantity and hence neglected here) contributes to the
back-reaction on the background.  Their zero- and first-order parts of
the Lie derivative do contribute to the diffeomorphism transformation,
with combined contribution equal to the diffeomorphism transformation
of the full variable, although in a rather mixed way. When combined to
$\bar{X}+\delta X$, conventional transformations are obtained.

%All this is
%clearly in contrast to the conventional picture, where
%gauge-transformation of a perturbation is linear, while the background
%is unaffected. Put simply, taking the Lie derivative does not commute
%with doing perturbation theory. Moreover, splitting a tensor quantity
%into a background and perturbed parts followed by fixing the
%background spoils the tensorial nature of the quantity, making it
%unclear what should be understood by the Lie derivative of $\bar X$ or
%$\delta X$. Although, the expressions (\ref{Lie_Pert}) may, in some
%sense, be viewed as definitions for the Lie derivative in the
%perturbed context.

\subsubsection{Transformation of the lapse function and shift vector}

In the covariant formulation, the lapse function and shift vector are
merely components of the space-time metric, and hence subject to
coordinate (gauge) transformations in the same way as any other metric
component. In the Hamiltonian framework, on the other hand, lapse and
shift act as Lagrange multipliers and are not phase space
variables. Therefore, unlike e.g.\ triad components, their gauge
transformations cannot be directly obtained as Poisson brackets
(\ref{Gauge_Trans}) (which would always give zero) with the
gauge-generating constraints. Nevertheless, there exists an indirect
procedure.\footnote{Alternatively, Poisson brackets can be defined on
an extended phase space which also includes the Lagrange multipliers
\cite{LapseGauge}.}

A coordinate change causes a change in the space-time foliation by
spatial slices, and thus the induced $E^a_i$ and $K_a^i$ change
according to their gauge transformations. Since the slicing is
determined by lapse and shift they, too, must change. Lapse and shift
not only determine the slicing but also, as in (\ref{deltaEdot}) and
(\ref{phibardot}), equations of motion which triad and extrinsic
curvature have to satisfy as one moves from one slice to the next.
Consistency of equations of motion for the gauge-transformed canonical
variables thus requires certain transformations of the lapse and
shift: They have to change such that they generate the correct
equations of motion for the transformed $E^a_i$ and $K_a^i$, on which
a canonical gauge transformation has been applied. In this way, gauge
transformations for $N$ and $N^a$ result unambiguously, even though
these are not phase space variables.

Hamilton's equations such as (\ref{deltaEdot}) have a time derivative
of a phase space variable on their left hand side, while the right
hand side depends on phase space variables and the Lagrange
multipliers. Thus performing a gauge transformation on both sides of
the equations one can obtain the transformations of $\delta N$ and
$\delta N^a$.  The non-trivial part of this recipe is the gauge
transformation of the left hand side, as a time derivative of a phase
space variable is not itself a phase space variable, and hence its Poisson
bracket with constraints is not defined. Furthermore, a gauge
transformation does not, in general, commute with taking time
derivatives, that is the gauge transformation of a time derivative is
not merely given by the time derivative of the gauge transformation.

Nevertheless, the gauge transformation of a time derivative
can be computed with the help of
\begin{lemma}\label{GaugeDot}
For an arbitrary linear phase space function $\delta X$, the
commutator between its gauge transformation and its time
derivative is given (up to second order terms) by a single
gauge transformation
\begin{equation}\label{CommGaugeDot}
\delta_{[\xi_0,\xi]}(\delta \dot
X)-\left(\delta_{[\xi_0,\xi]}\delta
X\right)\dot{}=\delta_{[0,\xi_0]}\delta X
\end{equation}
\end{lemma}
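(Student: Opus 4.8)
The plan is to exploit the fact that both time evolution and gauge transformations are generated by Poisson brackets with (combinations of) the constraints, so that the commutator on the left-hand side of (\ref{CommGaugeDot}) can be rewritten using the Jacobi identity and the closure of the constraint algebra. Concretely, the time derivative $\delta\dot X$ is $\{\delta X, H^Q[\bar N]+D[\bar N^a]\}$ evaluated with the background lapse $\bar N=\sqrt{\bar p}$ and (vanishing) background shift, while the gauge transformation $\delta_{[\xi_0,\xi]}\delta X$ is $\{\delta X, H^{(2)}[\bar N\xi_0]+D^{(2)}[\partial^a\xi]\}$. Thus
\[
\delta_{[\xi_0,\xi]}(\delta\dot X)-\bigl(\delta_{[\xi_0,\xi]}\delta X\bigr)^{\cdot}
= \bigl\{\{\delta X, C_{\rm evol}\}, C_{\rm gauge}\bigr\}
- \bigl\{\{\delta X, C_{\rm gauge}\}, C_{\rm evol}\bigr\}
- \{\delta X, \dot C_{\rm gauge}\},
\]
where $C_{\rm evol}=H^Q[\bar N]+D[\bar N^a]$ and $C_{\rm gauge}=H^{(2)}[\bar N\xi_0]+D^{(2)}[\partial^a\xi]$, and the last term accounts for the explicit time dependence of the smearing functions $\bar N$, $\xi_0$ appearing through the background. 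By the Jacobi identity the first two terms combine into $\{\delta X, \{C_{\rm evol},C_{\rm gauge}\}\}$.

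The second step is to evaluate $\{C_{\rm evol},C_{\rm gauge}\}$ using the (anomaly-free, hence closed) constraint algebra from \cite{ConstraintAlgebra}. Since $C_{\rm gauge}$ is smeared with first-order functions and $C_{\rm evol}$ with the background lapse, this bracket is controlled at the order we work to by the structure functions of the algebra: a bracket of two Hamiltonian constraints yields a diffeomorphism constraint with a field-dependent structure function, while brackets involving the diffeomorphism constraint reproduce the expected transformation of the smearing vector. The upshot should be that $\{C_{\rm evol},C_{\rm gauge}\}$ equals a constraint smeared with a function built from $\xi_0$ and its time derivative — precisely the combination that, together with the explicit-time-dependence term $\{\delta X,\dot C_{\rm gauge}\}$, collapses to $\{\delta X, H^{(2)}[\bar N\dot\xi_0]\}$ or an equivalent expression, i.e.\ to $\delta_{[0,\xi_0]}\delta X$ in the notation where the second slot of $\delta_{[\cdot,\cdot]}$ is the spatial smearing. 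One must check that the spatial-diffeomorphism ($\xi$) pieces cancel out of the commutator entirely, consistent with the fact that the right-hand side of (\ref{CommGaugeDot}) carries only $\xi_0$.

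The main obstacle is bookkeeping of the background time dependence and the perturbative orders. The term $\{\delta X,\dot C_{\rm gauge}\}$ hides $\dot{\bar N}=\dot a$, $\dot{\bar p}$, $\dot{\bar\varphi}$ through the $\bar p$- and $\bar\varphi$-dependent coefficients in $\h^{(1)}$, $\h_\pi^{(1)}$, $\h_\varphi^{(1)}$ and in $D^{(2)}$, and it must be massaged — using the background equations of motion (\ref{HubbleK}), (\ref{FriBG})--(\ref{KGBG}) — so that it assembles with the structure-function output of $\{C_{\rm evol},C_{\rm gauge}\}$ into exactly $H^{(2)}[\bar N\dot\xi_0]$ and nothing else. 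A secondary subtlety is that $\delta\dot X$ is defined via the \emph{full} lapse and shift in (\ref{deltaEdot}), but for the commutator identity only the background part contributes at the order considered, since $\{\delta X,\{\delta C,\delta C\}\}$ and cross terms with $\delta N,\delta N^a$ are of higher perturbative order; I would note this reduction explicitly at the start. Once these orderings are pinned down the identity follows, and I would present the verification compactly rather than expanding every density.
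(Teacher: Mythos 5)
Your skeleton --- express both operations as Poisson brackets, use the Jacobi identity to collapse the commutator into $\{\delta X,\{C_{\rm evol},C_{\rm gauge}\}\}$, evaluate the inner bracket with the closed constraint algebra, and discard constraints smeared with second-order functions --- is indeed the paper's route. But your anticipated endpoint is wrong, and this is a genuine gap rather than a matter of presentation. By the definition (\ref{Gauge_Trans}), the right-hand side of (\ref{CommGaugeDot}), $\delta_{[0,\xi_0]}\delta X$, is a \emph{spatial diffeomorphism} transformation, $\{\delta X,D^{(2)}[\D^a\xi_0]\}$: it involves $\xi_0$ itself, not $\dot\xi_0$, and is generated by the diffeomorphism constraint, not the Hamiltonian one. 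In the computation, the only term of $\{H[N]+D[N^a],H[\bar N\xi_0]+D[\D^a\xi]\}$ in (\ref{HHD}) that survives at linear order is the structure-function piece coming from the bracket of the two Hamiltonian constraints, $D\left[\f{\bar N}{\bar{p}}\D^a(\bar N\xi_0)\right]$, and since $\bar N=\sqrt{\bar p}$ is homogeneous this shift is $\D^a\xi_0+O(2)$; every other term carries a quadratic smearing and cannot affect the linear transformation of $\delta X$. You are right that the $\xi$-pieces drop out, but you never identify this $\{H,H\}$ mechanism; instead you expect the result to assemble into $\{\delta X,H^{(2)}[\bar N\dot\xi_0]\}$ and then call that $\delta_{[0,\xi_0]}\delta X$. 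In the paper's notation that object would be $\delta_{[\dot\xi_0,0]}\delta X$, which is not what the lemma asserts, so following your plan you would land on a different (and incorrect) identity.

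The source of the drift is the extra term $-\{\delta X,\dot C_{\rm gauge}\}$ you add for the explicit time dependence of the smearing functions, together with your expectation that background equations of motion are needed to reassemble it. Within the conventions of the lemma, the time derivative is defined as in (\ref{deltaEdot}) by the bracket with $H[N]+D[N^a]$, so the second term of the commutator is simply $\{\{\delta X,H[\bar N\xi_0]+D[\D^a\xi]\},H[N]+D[N^a]\}$, the Jacobi identity gives the whole answer, and neither $\dot C_{\rm gauge}$ nor the background equations (\ref{FriBG})--(\ref{KGBG}) enter; only the algebra (\ref{HHD}) and order counting are used. It is precisely the hunt for a $\dot\xi_0$-smeared Hamiltonian constraint that steers you away from the correct conclusion: once the $D[\D^a\xi_0]$ term from the $\{H,H\}$ structure function is identified, the lemma follows immediately.
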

\begin{proof}
Using the definition of the gauge transformation (\ref{Gauge_Trans})
and time derivative (\ref{deltaEdot}) via the Poisson bracket, by
virtue of the Jacobi identity we obtain
\begin{eqnarray}\label{ProofGaugeDot}
\delta_{[\xi_0,\xi]}(\delta \dot
X)-\left(\delta_{[\xi_0,\xi]}\delta
X\right)\dot{}&=&\left\{\{\delta X,H[N]+D[N^a]\},H[\bar N
\xi_0]+D[\D^a \xi]\right\}\nonumber\\&&-\left\{\{\delta X,H[\bar N
\xi_0]+D[\D^a \xi]\},H[N]+D[N^a]\}\right\}\nonumber\\
&=&\left\{\delta X, \{H[N]+D[N^a],H[\bar N \xi_0]+D[\D^a
\xi]\}\right\}\,.
\end{eqnarray}
The inner Poisson bracket can be computed using the constraint
algebra. In perturbative form, we have \cite{ConstraintAlgebra}
\bq
\label{HHD}\left\{H[N]+D[N^a],H[\bar N \xi_0]+D[\D^a
\xi]\right\}&=&D\left[\f{1}{\bar{p}}\left(N\D^a(\bar N \xi_0)-\bar
N\xi_0\D^a N\right)\right]-H\left[\D^a\xi\D_a N\right]\nonumber\\
&&+H\left[N^a\D_a(\bar N\xi_0)\right]+D\left[\D^c\xi \D_c N^a -
N^c\D_c\D^a\xi\right]\,. \eq
Most of these constraint terms are at least of second order, and
constraints whose Lagrange multiplier is quadratic will not affect the
(leading) linear part of the gauge transformation of $\delta
X$. Therefore the only relevant contribution comes from the first part
of the first term, $D\left[\f{\bar{N}}{\bar{p}}\D^a (\bar
N\xi_0)\right]$, which is equivalent to a single diffeomorphism
transformation with the (linear part of the) shift vector given by
\[
\f{\bar{N}}{\bar{p}}\D^a(\bar N\xi_0)=\D^a\xi_0+O(2)\,.
\]
The commutator then reads
\begin{eqnarray}
\delta_{[\xi_0,\xi]}(\delta \dot
X)-\left(\delta_{[\xi_0,\xi]}\delta X\right)\dot{}=\left\{\delta
X,D[\D^a\xi_0]\right\}\equiv \delta_{[0,\xi_0]}\delta
X\,.\nonumber
\end{eqnarray}
which is (\ref{CommGaugeDot}).
\end{proof}
The last equation implies that diffeomorphism invariant canonical
variables do have commuting time derivative and gauge
transformation. Moreover, the leading diffeomorphism term
originates from the Poisson bracket of the two Hamiltonian constraints
on the left hand side of (\ref{HHD}). Therefore, taking a time derivative
commutes (up to quadratic terms) with a diffeomorphism
transformation. This can also be seen from the absence of $\xi$ on the
right hand side of (\ref{CommGaugeDot}). In other words, gauge
transformations which do not involve the Hamiltonian constraint
(i.e. such that $\xi_0=0$) commute with taking time derivatives.

For later convenience we write out gauge transformed time
derivatives for a number of phase space variables:
\begin{eqnarray}\label{GaugeDotPhVar}
\delta_{[\xi_0,\xi]}(\dot \psi)&=&\left(\delta_{[\xi_0,\xi]}\psi\right)\dot{}\nonumber\\
\delta_{[\xi_0,\xi]}(\dot E)&=&\left(\delta_{[\xi_0,\xi]} E\right)\dot{}+\xi_0\nonumber\\
\delta_{[\xi_0,\xi]}(\delta \dot \varphi)&=&\left(\delta_{[\xi_0,\xi]}\delta \varphi\right)\dot{}\nonumber\\
\delta_{[\xi_0,\xi]}(\delta \dot
\pi)&=&\left(\delta_{[\xi_0,\xi]}\delta
\pi\right)\dot{}+\pi\Delta\xi_0,
\end{eqnarray}
where the gauge transformations of the triad components, following
from (\ref{Gauge_Trans_Explicit}) by comparison with (\ref{Triad}),
are given by
\begin{eqnarray}\label{GaugeEpsi}
\delta_{[\xi_0,\xi]}\psi=-\H\xi_0\quad , \quad
\delta_{[\xi_0,\xi]}E=\xi.
\end{eqnarray}
We now have all the ingredients to obtain the transformations of lapse
and shift perturbations which (for the scalar mode and in conformal
time) are expressed as in (\ref{PertLapseShift}). Writing the
perturbation of extrinsic curvature using the equation of motion
\[ \delta \dot E_i^a
\equiv \{\delta E_i^a, H^{(2)}[\delta N]+D^{(2)}[\delta     N^a]\}
\]
for the perturbed triad and the expression
$\H=\dot{a}/a=\dot{\bar{p}}/2\bar{p}$ for the Hubble parameter, we
obtain
\begin{equation}\label{deltaK_Cl}
\delta K^i_a=-\delta^i_a\left[\dot \psi+\H(\psi
+\phi)\right]+\D_a\D^i\left[\H E-(B-\dot E)\right].
\end{equation}
Performing a gauge transformation of the left hand side according to
(\ref{Gauge_Trans_Explicit}) and comparing to the gauge transformation
of the right hand side using (\ref{GaugeDotPhVar}) and
(\ref{GaugeEpsi}) yields the desired transformation of the Lagrange
multipliers. Specifically, the diagonal part provides the transformed
lapse perturbation
\be\label{GT_phi_Cl}
\delta_{[\xi_0,\xi]}\phi=\dot\xi_0+\H\xi_0\,,
\ee
whereas the
off-diagonal part implies \be\label{GT_BEdot_Cl}
\delta_{[\xi_0,\xi]}(B-\dot E)=-\xi_0.%
\ee
Thus, the transformations of lapse and shift are indeed determined by
gauge transformations of the phase space variables through the dynamical
equations of motion.

Gauge invariant combinations are then obtained from the gauge
transformations, which reproduces the Bardeen variables
(\ref{Bardeen}).  Note that, as follows from (\ref{GT_BEdot_Cl}), the
quantity $B-\dot E$ is diffeomorphism invariant. According to
Lemma~\ref{GaugeDot}, the gauge transformation of its time derivative
is then given simply by the time derivative of its gauge
transformation. Consequently, the last term in the $\Phi$-equation
(\ref{Bardeen}) gauge-transforms as
\[
\delta_{[\xi_0,\xi]}\left\{(B-\dot
E)^{\dot{}}\right\}=\left\{\delta_{[\xi_0,\xi]}(B-\dot
E)\right\}^{\dot{}}=-\dot\xi_0,
\]
which along with the second term of $\Phi$ compensates the
gauge-dependence of $\phi$.

We have determined gauge transformations of lapse and shift by making
sure that the form of equations of motion for $E^a_i$ and $K_a^i$ is
invariant. Here, we used the basic fact that changing the lapse
function and shift vector leads to a different space-time metric
decomposition and hence a different form of the triad and extrinsic
curvature as well as their evolution. A different slicing of space-time
also affects the matter variables, e.g. the definition of the field
momentum. Gauge transformations of the matter variables will induce
transformations of $N$ and $N^a$ through Hamilton's equations for
$\varphi$ and $\pi$ as they did for gravitational phase space
variables. We must therefore ensure that the transformations of the
lapse and shift generated in this way be consistent with those
obtained in (\ref{GT_phi_Cl}) and (\ref{GT_BEdot_Cl}). Taking the
gauge transformation of the left hand side of the equation of motion
\begin{eqnarray}
\delta \dot \varphi &\equiv& \{\delta \varphi,
H^{(2)}[N]+D^{(2)}[N^a]\}=\f{\bar N}{\bar{p}^{3/2}}\left(\delta
\pi-\bar{\pi}\f{(\delta E_i^a
\delta^i_a)}{2\bar{p}}\right)+\f{\delta N}{\bar{p}^{3/2}}\pi\nonumber\\
&=&\f{\delta\pi}{\bar{p}}+\f{\bar{\pi}}{\bar{p}}\left(3\psi-\Delta
E+\phi\right)\label{Pert_EoM_varphi}
\end{eqnarray}
according to (\ref{GaugeDotPhVar})
and (\ref{Gauge_Trans_Explicit}), and comparing it with the
transformation of the right hand side, yields the transformation of
the lapse perturbation, $\delta_{[\xi_0,\xi]}\phi=\dot\xi_0+\H\xi_0$
which agrees with Eq.~(\ref{GT_phi_Cl}). Repeating the procedure for
the momentum equation
\begin{eqnarray}
\delta \dot \pi &\equiv& \{\delta \pi,
H^{(2)}[N]+D^{(2)}[N^a]\}=\bar
N\left[\sqrt{\bar{p}}\Delta\delta\varphi-{\bar{p}^{3/2}}
\left(V_{,\varphi\varphi}\delta
\varphi+V_{,\varphi}\f{(\delta E_i^a
\delta^i_a)}{2\bar{p}}\right)\right]\nonumber\\
&=&\bar{p}\Delta\delta\varphi-\bar{p}^2\left(V_{,\varphi\varphi}\delta
\varphi-V_{,\varphi}(3\psi-\Delta
E+\phi)\right)+\bar{\pi}\Delta B\label{Pert_EoM_pi}\,,
\end{eqnarray}
results in $\delta_{[\xi_0,\xi]}B=\dot\xi$. The latter along with
(\ref{GaugeDotPhVar}) and (\ref{GaugeEpsi}), implying
\[
\delta_{[\xi_0,\xi]}\dot
E=\left(\delta_{[\xi_0,\xi]}E\right)\dot{}+\xi_0=\dot\xi+\xi_0\,,
\]
reproduces the correct gauge transformation (\ref{GT_BEdot_Cl}).  Thus
a fixed transformation of lapse and shift provides the correct gauge
transformation of both gravity and matter phase space variables. This
is a further consistency property ensured by the first class nature of
constraints.

For matter fields, the gauge invariant density and scalar field
perturbations are
\be\label{GI_varphi_Cl}
\delta\rho^{\rm GI}=\delta\rho + \dot{\bar{\rho}_{_\varphi}} \,  (B-\dot E) \, ;
\quad 
\delta\varphi^{\rm GI}=\delta\varphi + \dot{\bar{\varphi}} \, (B-\dot E) \, .
\ee

It is convenient for cosmological applications to introduce
the gauge-invariant quantities:
\begin{equation}
{\cal R} = \Psi+\H \left(\frac{\delta{\varphi}^{{\rm
GI}}}{\dot{\bar{\varphi}}}\right)
=\psi+\H\left(\frac{\delta{\varphi}}{\dot{\bar{\varphi}}}\right)
\,\label{GI_CurvPert1}
\end{equation}
\begin{equation}
- \zeta = \Psi +  \H \left(\frac{\delta{\rho}^{{\rm
GI}}}{\dot{\bar{\rho}}_{_{_\varphi}}}\right)
=\psi+\H \frac{\delta\rho}{\dot{\bar{\rho}}_{_\varphi}}
\end{equation}
and
\begin{equation}
{\mathcal R}_2 = \Phi-\H\left(\frac{\delta{\varphi}^{{\rm
GI}}}{\dot{\bar{\varphi}}}\right)
-{\left(\frac{\delta{\varphi}^{{\rm
GI}}}{\dot{\bar{\varphi}}}\right)}^{\dot{}} ~=~ \phi-\H
\left(\frac{\delta\varphi}{\dot{\bar{\varphi}}}\right)
-{\left(\frac{\delta\varphi}{\dot{\bar{\varphi}}}\right)}^{\dot{}}
\,\label{GI_CurvPert2}
\end{equation}
The following points are worth noting regarding the above three gauge-invariant
quantities:
(i) ${\cal R}$ provides information about the nature of the
long wavelength perturbations i.e.\ when the perturbations have left the
Hubble radius. More precisely, $\dot{\cal R}$ vanishes in the long wavelength
limit if the perturbations are adiabatic \cite{CosmoPert,PowerLargeScales}.
(ii) $\zeta$ refers to the 3-curvature perturbations on uniform density
hypersurfaces. As ${\cal R}$, $\zeta$ is also conserved in the
large scales and quantifies the large angular scale temperature anisotropies
in the cosmic microwave background. In the slow-roll limit, the two
gauge-invariant quantities are identical and either of them can be used
to quantify the primordial perturbations.
(iii) Unlike $\zeta$ and ${\cal R}$, the Bardeen potential $\Phi$
evolves in time from Hubble exit until the re-entry during
matter/radiation era. More precisely, during inflation, at
the super-Hubble scales,
\begin{equation}
\label{Phi-zeta}
\Phi \simeq \epsilon_{_{sr}} \zeta \quad \Longrightarrow \quad \Phi \simeq \epsilon_{_{sr}} |A|
\end{equation}
where $\epsilon_{_{sr}}$ is the slow-roll parameter which is much less
than unity and $|A|$ is the value of $\zeta$ at the super-Hubble
scales. However, at horizon re-entry, $\Phi \sim \zeta$.
(iv) ${\cal R}$ (and also $\zeta$) is linearly related to the
Mukhanov-Sasaki variable $Q$ which is useful for studying the
quantization of perturbations.
(v) Unlike ${\cal R}$ and $\zeta$, ${\cal R}_2$ is not often used in
the study of cosmological perturbations.
(vi) In the quantum corrected version of the
perturbation equations, all the above quantities acquire non-trivial
quantum corrections which we will discuss in the rest of the paper.
This also affects the conservation of power at large scales.

For the classical gauge transformations we thus produce the well-known
gauge invariant quantities (\ref{Bardeen}), but we have now done so in a
way which is entirely canonical. These methods therefore generalize
directly to the case of equations and gauge transformations which are
corrected by effects from quantum gravity even in quantum regimes
where no underlying smooth space-time picture exists.

\subsection{Inclusion of quantum corrections}

With quantum gravity effects, both the equations of motion and gauge
transformations are governed by the quantum corrected Hamiltonian
constraint $H^Q$ in (\ref{H2Q}), including all the counter-terms and
the diffeomorphism constraint (\ref{D2gauge}) which remains
unaffected. Recall also that the terms $\alpha^\prime p$, $\nu^\prime
p$ as well as the counter-term functions are leading order quantum
effects (not to be confused with perturbative order). From now on we
neglect all higher order quantum corrections, such as
$(\alpha^\prime)^2 p^2$, $f^2$, $\alpha^\prime p f$ etc. Moreover,
also in the background equations (\ref{FriBG}), (\ref{RayBG}) and
(\ref{KGBG}) the quantum corrected terms must be used for consistency.

Gauge transformations are generated by the quantum corrected
constraints with the lapse function and shift vector parameterized
by the infinitesimal vector field $\xi^\mu=(\xi_0,\D^a\xi)$ as
\be\label{LapseShift_Def} \delta N = \bar N\xi_0, \quad \delta N^a
= \D^a \xi.
\ee
The gauge transformation of the triad perturbation taking into
account the counter-terms can be computed using the Poisson bracket
\be \delta_{[\xi_0,\xi]} E^a_i\equiv \{\delta E_i^a,H^{Q(2)}[\bar
N\xi_0]+D^{(2)}[\D^a\xi]\}=2\bar{\alpha} \bar{k} \bar{p}
\xi_0 (1+f)\delta_i^a + \bar{p}
(\delta_i^a \Delta\xi-\D^a\D_i\xi)\nonumber \ee
from which the transformations for $\psi$ and $E$ follow
\be\label{GT_psiE}%
\delta_{[\xi_0,\xi]} \psi=-\H\xi_0(1+f) , \quad \delta_{[\xi_0,\xi]} E=\xi.
\ee
The gauge transformed extrinsic curvature
yields%
\be\label{GT_K}
\delta_{[\xi_0,\xi]} (\alpha K_a^i) =
\D^i\D_a(\H\xi+\bar{\alpha}^2\xi_0)-
\left[\f{1}{2}\H^2\xi_0(1+g)-4\pi G\xi_0\bar{\alpha}(\bar{p}V(\bar\varphi)
-\f{\dot{\bar\varphi}^2}{2\bar\nu}(1+f_2) )\right]\delta^i_a. %
\ee%
We compare this equation with the gauge transformation of
(\ref{deltaK}), eliminating the potential term using the
background Raychaudhuri equation (\ref{RayBG}). By virtue of the
anomaly-freedom condition (\ref{f1}) along with
\cite{ConstraintAlgebra}
\begin{equation}\label{f1_f2}
f_2=2f_1 \,,
\end{equation}
the off-diagonal part results in
\be\label{GT_BEdot} \delta_{[\xi_0,\xi]}(B-\dot
E)=-\bar{\alpha}^2\xi_0\,, \ee
whereas the diagonal part yields
\be\label{GT_phi}
\delta_{[\xi_0,\xi]}\phi=\dot\xi_0+\H\xi_0\left(1+2f^\prime
\bar{p}+\f{\bar{\alpha}^\prime \bar p}{\bar{\alpha}}\right)\,. \ee
Remarkably, the anomaly cancellation condition (\ref{fgh}) implies
that the last two terms inside the parenthesis mutually cancel,
hence quantum corrected transformation of the lapse perturbation
is equivalent to the classical one (\ref{GT_phi_Cl}).
Finally the matter perturbation transforms
as
\be\label{GT_varphi}
\delta_{[\xi_0,\xi]}\delta\varphi=\dot{\bar\varphi}(1+f_1)\xi_0\,.
\ee
The four metric perturbations can be combined into two gauge invariant
quantum corrected potentials
\bq\label{BardeensPot}
\Psi&=&\psi-\H(1+f)\f{B-\dot E}{\bar{\alpha}^2}\\
\Phi&=&\phi+\left(\f{B-\dot E}{\bar{\alpha}^2}\right)^{\dot{}}+\H\f{B-\dot
E}{\bar{\alpha}^2}
%\left(1+2f^\prime \bar{p}+\f{\bar{\alpha}^\prime \bar{p}}{\bar{\alpha}}\right)
\,.\nonumber \eq
Similarly, the gauge invariant matter variables are
\be\label{GI_varphi}
\delta\varphi^{\rm GI}=\delta\varphi+\dot{\bar{\varphi}}(1+f_1)\f{B-\dot
E}{\bar{\alpha}^2} \,
% ; \qquad
%\delta\rho^{\rm GI}=\delta\rho+\dot{\bar{\rho}}_{_{\varphi}}(1+f_1)\f{B-\dot
%E}{\bar{\alpha}^2}
%\,.
\ee
Note that when omitting the quantum corrections in
Eqs.~(\ref{BardeensPot}) and (\ref{GI_varphi}) one recovers the
classical results (\ref{Bardeen}).  {}From the corrected gauge
invariant expressions, one can directly see that the combination
\begin{equation}
 {\cal R} = \psi+\frac{\H}{\dot{\bar{\varphi}}} \frac{1+f}{1+f_1}
 \delta\varphi=\Psi+\frac{\H}{\dot{\bar{\varphi}}} \frac{1+f}{1+f_1}
 \delta\varphi^{\rm GI}\,,
\end{equation}
which does not refer to the non-trace perturbations $E$ and $B$, is
gauge invariant. Also the explicit $\bar{\alpha}$-dependence drops out,
showing that this particular perturbation is quantum corrected only
because we were required to include counter-terms $f$ and $f_1$ in
addition to the primary correction function $\bar{\alpha}$. Similarly
%the quantum corrected $\zeta$ function is given by:
%
%\begin{equation}
%- \zeta  = \psi+\frac{\H}{\dot{\bar{\rho}}_{_{\varphi}}} \frac{1+f}{1+f_1}
% \delta\rho = \Psi+\frac{\H}{\dot{\bar{\varphi}}_{_{\varphi}}}
%\frac{1+f}{1+f_1}
% \delta\rho^{\rm GI}\,,
%\end{equation}
% and
the other curvature perturbation is
\begin{equation}
 {\cal R}_2= \phi-\H\frac{\delta\varphi}{\dot{\bar{\varphi}}} \frac{1}{1+f_1}
- \left(\frac{\delta\varphi}{\dot{\bar{\varphi}}(1+f_1)}\right)^.=
\Phi-\H\frac{\delta\varphi^{\rm GI}}{\dot{\bar{\varphi}}}
\frac{1}{1+f_1} - \left(\frac{\delta\varphi^{\rm
GI}}{\dot{\bar{\varphi}}(1+f_1)}\right)^.
\end{equation}
which does not refer to $\bar{\alpha}$, either.
(A generalization of the perturbation $\zeta$ to quantum corrected
equations requires a general derivation of $\delta\rho^{{\rm GI}}$,
which will be done elsewhere.)

\section{Gauge invariant equations of motion}

We can now formulate the perturbed equations of
motion purely in terms of the gauge invariant variables derived in
the previous section. The following auxiliary relations will be
useful:
\begin{eqnarray}\label{GaugeAux}
\delta E^a_i&=&-2\bar{p}\Psi\delta_i^a \nonumber\\
&&-2\H \bar{p}(1+f)\f{B-\dot E}{\bar{\alpha}^2} \delta_i^a +\bar{p}
(\delta_i^a\Delta-\D^a\D_i)E \\
 \bar{\alpha} \delta K_a^i&=&-\delta_a^i\left(\dot\Psi+
\H\left(\Psi+\Phi(1+f)\right)\right)\nonumber\\
&&-\delta_a^i\f{B-\dot
E}{\bar{\alpha}^2}\left(\dot\H(1+f)-\H^2\f{\bar{\alpha}^\prime
\bar{p}}{\bar{\alpha}}\right)+\D_a \D^i\left(\H E - (B-\dot
E)\right)\nonumber
\end{eqnarray}
where the first line of each equation contains only gauge
invariant terms.

\subsection{Diffeomorphism constraint equation}
Varying the diffeomorphism constraint (\ref{DiffeoConstraint}) with
respect to the shift perturbation yields the diffeomorphism constraint
equation (the space-time component of Einstein's equation):
\begin{equation}
0=8\pi G\alpha\f{\delta D[N^c]}{\delta (\delta
N^c)}=\bar{p}\left(\D_c(\bar{\alpha}\delta
K_a^i\delta_i^a)-\D_k(\bar{\alpha}\delta K_c^k)\right)-\bar{\alpha}
\bar{k}\D_d\delta E^d_k \delta_c^k+8\pi G\bar{\alpha} \bar{\pi}
\D_c\delta\varphi\,.
\end{equation}
Using the gauge invariant variables defined in (\ref{BardeensPot})
and (\ref{GI_varphi}), it can be rewritten as
\begin{equation}\label{DiffEq}
\D_c\left[\dot\Psi+\H(1+f)\Phi-4\pi
G\f{\bar{\alpha}}{\bar{\nu}}\dot\varphi\delta\varphi^{\rm GI}\right]+({\rm
gauge \, \,terms})=0,
\end{equation}
where the `gauge terms' are
\begin{equation} \label{GaugeTerms}
2\f{B-\dot
E}{\bar{\alpha}}\D_c\left[-\dot\H(1+f)+\H^2\left(1+f+\f{\bar{\alpha}^\prime
\bar{p}}{\bar{\alpha}}\right)-4\pi G
\f{\bar{\alpha}}{\bar{\nu}}\dot\varphi^2(1+f_1)\right]
\end{equation}
After eliminating the $\dot \H$-term using the background
Raychaudhuri equation (\ref{RayBG}) the expression inside the
square brackets of (\ref{GaugeTerms}) becomes
\[
4\pi G \f{\bar{\alpha}}{\bar{\nu}}\dot{\bar{\varphi}}^2
\left(f-f_1-\f{\bar{\nu}^\prime
\bar{p}}{3\bar{\nu}}\right)
\]
which is proportional to one of the anomaly-freedom conditions
(\ref{f1}). Thus all gauge dependent terms vanish and the
diffeomorphism constraint equation takes the form
\begin{equation}\label{DiffEqFinal}
\D_c\left(\dot\Psi+\H(1+f)\Phi\right)=4\pi
G\f{\bar{\alpha}}{\bar{\nu}}\dot\varphi \D_c\delta\varphi^{\rm GI}\,.
\end{equation}
Note that classically the right hand side is nothing but the
gauge invariant space-time component of the perturbed matter
stress-energy tensor $-4\pi G a^2 \delta T_S^T$.

\subsection{Hamiltonian constraint equation}

As seen explicitly for the diffeomorphism constraint, for constraints
which are part of a closed system, i.e.\ which result from an
anomaly-free quantization, gauge invariance of the equations of motion
is guaranteed and showing that the gauge dependent terms of a given
equation do vanish is in general a rather tedious, although
straightforward, exercise. We leave out such explicit demonstrations
in this and the following sections.

The Hamiltonian constraint equation is obtained by variation with
respect to the lapse perturbation:
\begin{eqnarray}
\f{\delta H^Q[N]}{\delta (\delta N)}&=&\f{1}{16\pi G}\left[-4\bar{\alpha}
\bar{k}\sqrt{\bar{p}}(1+f)\delta K_a^i \delta_i^a-\f{\bar{\alpha}
\bar{k}^2}{\sqrt{\bar{p}}}(1+g)\delta
E_i^a\delta_a^i+\f{2\bar{\alpha}}{\sqrt{\bar{p}}}\D_a\D^i\delta
E^i_a\right]\nonumber\\
&+&\f{\bar{\nu}\bar{\pi}\delta\pi}{\bar{p}^{3/2}}(1+f_1)-
\left(\f{\bar{\nu}\bar{\pi}^2}{2\bar{p}^{3/2}}(1+f_2)
-\bar{p}^{3/2}V(\varphi)\right)\f{\delta E_i^a
\delta^i_a}{2\bar{p}}+\bar{p}^{3/2}V_{,\varphi}(\bar{\varphi})(1+f_3)
\delta\varphi\nonumber\\
&=&0\, ,
\end{eqnarray}
where
\begin{equation}
\label{f3}
f_3 = \frac{3}{2 \bar{p}^{3/2}} \int d\bar{p} \, \bar{p}^{1/2} \, f
\end{equation}
can be obtained from (\ref{f3diff}).

Multiplying both sides by $\bar{\alpha}/\sqrt{\bar{p}}$ again allows
one to replace the background extrinsic curvature with the Hubble
rate.  Then eliminating the field momentum and its perturbation using
(\ref{deltapi}) and the auxiliary expressions (\ref{GaugeAux}) along
with anomaly-freedom conditions of \cite{ConstraintAlgebra} to reduce
the number of counter-term functions, one arrives at the gauge
invariant Hamilton constraint equation (or perturbed Friedmann
equation)
\begin{eqnarray}\label{HamEqFinal}
\Delta(\bar{\alpha}^2
\Phi)-3\H(1+f)\left[\dot\Psi+\H\Phi(1+f)\right]&=&4\pi
G\f{\bar{\alpha}}{\bar{\nu}}(1+f_3)\left[\dot{\bar{\varphi}}
\delta\dot\varphi^{\rm
GI}-\dot{\bar{\varphi}}^2(1+f_1)\Phi\nonumber\right.\\
&&+\left.\bar{\nu} \bar{p} V_{,\varphi}(\bar{\varphi})
\delta\varphi^{\rm GI}\right]
\end{eqnarray}
Again, the right hand side is nothing but the time-time component of
the perturbed stress-energy tensor, which now includes quantum
corrections.

\subsection{Hamilton's equations}
As mentioned before, each pair of Hamilton's equations for
configuration variables and momenta can be combined into a single
second order equation. We illustrate the procedure starting with the
matter field. The time derivative of the momentum perturbation can be
first computed using the Poisson bracket
\begin{eqnarray}
\delta\dot\pi&=&\{\delta\pi,H[N]+D[N^a]\}=\bar{\pi} \D_a \delta N^a -
\delta N \bar{p}^{3/2} V_{,\varphi} (\bar{\varphi}) (1+f_3)\\&&+\bar N
\left[\sqrt{\bar{p}}\bar{\sigma}(1+g_5)\Delta\delta\varphi-
\bar{p}^{3/2}V_{,\varphi\varphi}(\bar{\varphi})(1+g_6)\delta\varphi-
\bar{p}^{3/2}V_{,\varphi}(\bar{\varphi})\f{\delta
E_i^a\delta_a^i}{2\bar{p}}\right],\nonumber
\end{eqnarray}
and then compared to the time derivative of the right hand side of
Eq.~(\ref{deltapi}), which will include second time derivatives of
the scalar field. With the help of the background equations and
anomaly cancellations conditions, the Klein-Gordon equation can be
cast in the gauge invariant form
\bq\label{GI_KG}%
\delta \ddot \varphi^{\rm GI}\!\!\! &+&\!\!\! 2 \H \delta \dot
\varphi^{\rm GI} \left(1 - \f{\bar{\nu}^\prime \bar{p}}{\bar{\nu}}-g_1^\prime
\bar{p}\right)-\bar{\nu} \bar{\sigma} (1-f_3)\Delta \delta \varphi^{\rm GI}
+ \bar{\nu} \bar{p}
V,_{\varphi\varphi}(\bar{\varphi})\delta \varphi^{\rm GI}
\\\!\!\!&+&\!\!\! 2\bar{\nu} \bar{p}
V,_{\varphi}(\bar{\varphi})(1+f_1) \Phi
-\dot{\bar{\varphi}}\left[(1+f_1)\dot\Phi+
3(1+g_1)\dot\Psi\right]-2\H\dot{\bar{\varphi}}
(f_3^\prime \bar{p})\Phi=0\,.\nonumber
\eq

Similarly one arrives at the spatial components of Einstein's
equation. Taking the time derivative of Eq.~(\ref{deltaK}) and noting
that
\[
\left(\bar{\alpha}\delta K_a^i\right)^{\dot{}}\equiv \bar{\alpha}\delta\dot
K_a^i+\delta K_a^i\dot{\bar{\alpha}}=\alpha\delta\dot K_a^i+2\H \bar{\alpha}
\delta K_a^i \left(\f{\bar{\alpha}^\prime \bar{p}}{\bar{\alpha}}\right),
\]
one can substitute the time derivative of the extrinsic curvature
perturbation using the Poisson bracket
\[
\delta \dot K_a^i=\{\delta K_a^i, H[N]+D[N^b]\}.
\]
The resulting expression will contain second order correction
functions $\alpha^{(2)}$ and $\nu^{(2)}$ related to the background
ones, $\bar{\alpha}$ and $\bar{\nu}$, by the conditions (\ref{alpha2})
for anomaly freedom.

Taking the trace of each equation and substituting it back into the
left hand side, one obtains
\begin{equation} \label{alpha2simple}
\frac{\partial\alpha^{(2)}}{\partial(\delta E^a_i)} =
\f{\bar{\alpha}^\prime \delta E^c_j}{6\bar{p}}
\left(\delta^j_c\delta_a^i-2\delta_c^i\delta_a^j\right), \quad
\frac{\partial\nu^{(2)}}{\partial(\delta E^a_i)} = \f{\bar{\nu}^\prime
\delta E^c_j}{6\bar{p}}
\left(\delta^j_c\delta_a^i-2\delta_c^i\delta_a^j\right)
\end{equation}
whose gauge invariant parts read
\begin{equation} \label{GIalpha2}
\left(\frac{\partial\alpha^{(2)}}{\partial(\delta
E^a_i)}\right)^{\rm GI} =-\f{\bar{\alpha}^\prime}{3}\Psi\delta^i_a,
\quad \left(\frac{\partial\nu^{(2)}}{\partial(\delta
E^a_i)}\right)^{\rm GI} =-\f{\bar{\nu}^\prime}{3}\Psi\delta^i_a.
\end{equation}
The combined second order equation naturally decouples into two
independent equations by taking diagonal and off-diagonal
terms. After a tedious but rather straightforward computation, taking
into account the background equations of motion together with the
anomaly-freedom conditions, the former equation takes the form
\begin{eqnarray}\label{GI_Diag}
\ddot\Psi+\H\left[2\dot\Psi\left(1-\f{\bar{\alpha}^\prime
\bar{p}}{\bar{\alpha}}\right)+\dot\Phi(1+f)\right]
&+& \left[ \dot\H + 2 \H^2\left(1+f^\prime \bar{p} -
\f{\bar{\alpha}^\prime \bar{p}}{\bar{\alpha}}\right)\right]\Phi(1+f)
\nonumber \\
&=& 4\pi
G\f{\bar{\alpha}}{\bar{\nu}}\left[\dot\varphi\delta\dot\varphi^{\rm
GI}-\bar{p}\bar{\nu} V_{,\varphi}(\bar{\varphi})\delta\varphi^{\rm GI}
 \right]\,.
\end{eqnarray}
In the absence of anisotropic stress in the matter sector, which
is the case for the scalar field, the gauge invariant part of the
off-diagonal equation reads:
\begin{equation}\label{GI_OffDiag}
\D_a\D^i(\bar{\alpha}^2\left(\Phi-\Psi(1+h)\right))=0,
\end{equation}
which implies $\Phi=\Psi(1+h)$, replacing the classical relation
$\Phi=\Psi$. Here, $h$ is a counter-term correction which has to satisfy
\begin{equation} \label{h}
 h=-f+2\frac{\bar{\alpha}'\bar{p}}{\bar{\alpha}}\,.
\end{equation}

\section{Qualitative properties of the scalar perturbations}
\label{s:Qualitative}

In this section, we discuss salient properties of the scalar
perturbations including inverse triad corrections from loop quantum
gravity. For now, we do not compute the power-spectrum since this
would involve reducing the perturbation equations (\ref{DiffEqFinal}),
(\ref{HamEqFinal}), (\ref{GI_Diag}), (\ref{GI_KG}) into a single
differential equation in terms of the Mukhanov-Sasaki variable.
Instead, we will focus on aspects of the classical matter
perturbations in an effective quantum space-time. In particular, we
show that (i) the speed of scalar perturbations is less than unity and
can in fact be much smaller, and (ii) the scalar perturbations are not
purely adiabatic and have a small entropic contribution arising from
the quantum correction.

\subsection{Speed of perturbations}

Using the relation $\Phi = (1 + h) \Psi$, the perturbation equations
(\ref{DiffEqFinal}), (\ref{HamEqFinal}), (\ref{GI_Diag}) lead to
%(\ref{GI_Diag}), and (\ref{GI_OffDiag}):
%
\begin{eqnarray}
\label{eq:dT00}
& & \bar{\alpha}^2 (1 + h) \Delta\Psi
- 3\H(1+f)\left[\dot\Psi+\H(1+f)(1 + h) \Psi \right]  \\
%%%
& & \qquad \qquad \qquad \qquad \qquad = 4\pi G\f{\bar{\alpha}}{\bar{\nu}}(1+f_3)
\left[\dot{\bar{\varphi}} \delta\dot\varphi^{\rm GI}
- \dot{\bar{\varphi}}^2(1+f_1)\Phi
+ \bar{\nu} \bar{p} V_{,\varphi}(\bar{\varphi})
\delta\varphi^{\rm GI}\right] \nonumber \\
%%%%%%%%%
\label{eq:dT0i}
& & \dot\Psi+\H(1+f)(1 + h) \Psi =
4\pi G\f{\bar{\alpha}}{\bar{\nu}}\dot\varphi \delta\varphi^{\rm GI} \\
%%%%%%%%%
\label{eq:dTii}
& &  \ddot\Psi+\H\left[2\left(1-\f{\bar{\alpha}^\prime
\bar{p}}{\bar{\alpha}}\right) \dot\Psi +(1+f) (1 + h) \dot\Psi \right]  \\
%%%
& & + \left[\dot\H + 2\H^2\left(1+f^\prime \bar{p} -
\f{\bar{\alpha}^\prime \bar{p}}{\bar{\alpha}}\right)
+  \frac{\dot{h}}{(1 + h)} \right] (1+f) (1 + h) \Psi \nonumber
%\nonumber\\
%%%%
%&=&
= 4\pi G\f{\bar{\alpha}}{\bar{\nu}}\left[\dot\varphi\delta\dot\varphi^{\rm
GI}-\bar{p}\bar{\nu} V_{,\varphi}(\bar{\varphi})\delta\varphi^{\rm GI}\right]\,.
\end{eqnarray}
Dividing (\ref{eq:dT00}) by $1 + f_3$ and subtracting the result from
Eq.~(\ref{eq:dTii}) leads to
\begin{eqnarray}
\label{eq:temp1}
& & \ddot\Psi - \bar{\alpha}^2 \frac{1 + h}{1 + f_3} \Delta \Psi
+ \H \left[2 \left(1-\f{\bar{\alpha}^\prime
\bar{p}}{\bar{\alpha}}\right) +  (1 + f) (1 + h) + 3 \frac{1 + f}{1 + f_3} \right] \dot{\Psi}  \\
%%%%
& & \left[
\left\{\H \frac{\dot{h}}{1 + h}  + \dot\H  + 2 \H^2 \left(1 + f' \bar{p} - \f{\bar{\alpha}^\prime
\bar{p}}{\bar{\alpha}} \right) \right\} (1 + f) (1 + h) +  \H^2 (1 + h)
\right. \nonumber \\
%%%%%
& & \left. \times
\left\{3 \frac{(1 + f)^2}{1 + f_3} - \f{1 + \bar{\alpha}^\prime\bar{p}/\bar{\alpha}}{1 - \bar{\nu}^{\prime}\bar{p}/\bar{\nu} } (1 + f_1) \right\} %\right. %\nonumber \\
%%%%
+  \dot\H \f{(1 + f_1) (1 + h)}{1 -\bar{\nu}^{\prime}\bar{p}/\bar{\nu} } \right] \Psi
= - 8 \pi G \bar{\alpha} \bar{p} (1 + f_3) V_{,\bar{\varphi}} \delta \varphi^{{\rm GI}}\,. \nonumber
\end{eqnarray}
Replacing $V_{,\bar{\varphi}}$ and $\delta \varphi$
using the Klein-Gordon equation (\ref{KGBG}) and the
diffeomorphism constraint (\ref{eq:dT0i}), we obtain a second-order
partial differential equation for $\Psi$:
{\small
\begin{eqnarray}
\label{eq:FinPhi}
& & \ddot\Psi - c_s^2 \Delta \Psi
+ \left[2 \H \left\{1 + \frac{3}{2} \frac{f -f_3}{1 + f_3}
+ 2 \frac{\bar{\nu}^{\prime}}{\bar{\nu}} \bar{p}
+  \frac{\bar{\alpha}^{\prime}}{\bar{\alpha}} \bar{p} f - \frac{f^2}{2} \right\}
- 2 \frac{\ddot{\bar\varphi}}{\dot{\bar\varphi}} \right] \dot\Psi  \\
%%%%%
& & \left[
\H \left\{\frac{\dot{h}}{1 + h}
-  2 \frac{\ddot{\bar\varphi}}{\dot{\bar\varphi}} \right\} (1 + f)
+ \dot\H \left\{\frac{1 + f_1}{[1 - \bar{\nu}' \bar{p}/(3\bar{\nu})]} + (1 + f) \right\}
+ \H^2 \left\{ 3 \frac{(1 + f)^2}{1 + f_3} \right.
\right. \nonumber \\
%%%%%%
& & \left. \left.
+ 2 \left(1 + f^\prime \bar{p} - \f{\bar{\alpha}^\prime \bar{p}}{\bar{\alpha}} \right) (1 + f)
- \f{(1 + \bar{\alpha}^\prime\bar{p}/\bar{\alpha})}{(1 - \bar{\nu}^{\prime}\bar{p}/(3 \bar{\nu}))} ( 1 + f_1)
- 4 \left(1 - \frac{\bar{\nu}^{\prime}}{\bar{\nu}} \bar{p} \right) ( 1 + f)\right\}
\right](1 + h)\Psi = 0 \nonumber
\end{eqnarray}
}
where the second term shows the speed of perturbations
\begin{equation}
\label{eq:cs}
c_s^2 =  \bar{\alpha}^2 \frac{1 + h}{1 + f_3}\,.
\end{equation}
Eqs.~(\ref{eq:FinPhi}) and (\ref{eq:cs}) constitute one of the main
results of this paper, which has the
following implications:
\begin{enumerate}
\item[(i)] In the perturbative regime analyzed here, we have
  $\bar{\alpha}>1$ such that there is a danger of the speed of sound
  becoming super-luminal. However, as in the context of gravitational
  waves in \cite{tensor}, one must compare the propagation speed not
  with the classical speed of light (which is one) but with the
  physical speed of light of electromagnetic fields in the same
  effective quantum space-time. Since the Maxwell Hamiltonian is
  subject to quantum gravity corrections, too, \cite{QSDV,MaxwellEOS}
  the physical speed of light can differ from the classical one. In
  fact, for an anomaly-free coupling of the Maxwell field to gravity
  it must be larger than one by a factor which equals $\bar{\alpha}^2$
  as shown in \cite{tensor}. The speed of perturbations derived here
  is super-luminal compared to the physical speed of light only if the
  remaining factor $(1+h)/(1+f_3)$ in (\ref{eq:cs}) is larger than
  one. Using Eqs.~(\ref{h}) and (\ref{f3}) it is easy to show that, in
  the perturbative regime considered here (where $\bar{\alpha}'$ is
  negative), we have $(1+h)/(1+f_3)\sim 1 + h - f_3 < 1$.  Thus, in
  the perturbative regime, the speed of scalar perturbations is indeed
  less than unity. Again, one can see the importance of consistency
  conditions for the counter-terms. (Although such a scenario arises
  in the case of non-canonical scalar field inflationary models
  \cite{KInfPert}, classically it is not possible for canonical scalar
  field inflation.  Holonomy corrections, which have been used in
  \cite{HolonomyInfl} without ensuring consistency and
  anomaly-freedom, have been claimed to lead to a speed of sound much
  larger than unity and even divergent in some phases. This may
  indicate the inconsistency of the perturbation equations used
  there.)
\item[(ii)] From the corrected diffeomorphism constraint equation
(\ref{eq:dT0i}) one learns that in the absence of matter fields,
the metric perturbation decays more slowly compared to the classical
case. Assuming that $\bar{\alpha}$ is a slowly varying function, the
metric perturbations decay as
\begin{equation}
\Psi \propto \frac{1}{a^{(1 + f) (1 + h)}}
\end{equation}
According to (\ref{h}), $f+h=2\bar{\alpha}'p/\bar{\alpha}<0$ (and $fh$
is subdominant compared to $f+h$ in the regime considered
here). Hence, the decay should happen more slowly, implying that the
inverse triad corrections enhance metric perturbations.
\item[(iii)] In the long wavelength limit, the second term in 
Eq.~(\ref{eq:FinPhi}) can be neglected and the perturbations can be
treated to be independent of the wave-number $|{\bf k}|$. Assuming
that $\bar{\alpha}$ is a slowly varying function, the Bardeen
potential is given by
\begin{equation}
\Psi \propto \epsilon_{_{sr}} a^{\frac{3}{3 - 2 n_{\alpha}} - \frac{1}{2}}
\end{equation}
%
%{\bf How is this equation derived?}
where the constant of proportionality is determined by the choice of
quantum state defined at the initial epoch of inflation
\cite{CosmoPert}. Comparing this expression with the
corresponding classical equation (\ref{Phi-zeta}) suggests that the
primordial perturbations have a different behavior compared to their
classical counterpart. For $0 < n_{\alpha} < 3/2$, the quantum
perturbations enhance the primordial perturbations compared to the
classical one. For $n_{\alpha} > 3/2$, the perturbations decay and
will lead to tiny primordial perturbations. This suggests that
quantifying the primordial perturbations can, in principle, constrain
the value of $n_{\alpha} > 3/2$. We will discuss the implications of
this in a future publication.
\item[(iv)] Note that in arriving at the above perturbation equation, we have
not used the perturbed Klein-Gordon equation (\ref{GI_KG}). To obtain
the primordial power spectrum, we would need to consider the combined
evolution of the scalar and the metric perturbations $\delta
\varphi^{{\rm GI}}$ and $\Psi$.
\end{enumerate}

\subsection{Isocurvature perturbations}

For canonical scalar fields in the classical theory, ${\cal R}$
is conserved on large scales implying that the perturbations are
adiabatic. It can be shown \cite{CosmoPert} that
\begin{equation}
\dot{\cal R}_{{\rm class}} = \frac{\H}{4 \pi G \dot{\bar{\varphi}}^2}  \Delta \Psi
~~~ \stackrel{k \to 0}\longrightarrow ~~~ 0 \, .
\end{equation}
Thus, on these scales ${\cal R}$ is conserved. However quantum effects
modify this and lead to small entropic perturbations. Taking a time
derivative of Eq.~(\ref{GI_CurvPert1}) and using the diffeomorphism
constraint (\ref{eq:dT0i}) leads to
\begin{eqnarray}
\dot{\cal R} &=& \dot{\Psi} + \f{1+ f}{1 + f_1}\f{\bar{\nu}}{\bar{\alpha}}
\left[ \ddot{\Psi} + \left(\H (1 + f) (1 + h) + \frac{\dot\H}{\H} - 2 \frac{\ddot{\bar{\varphi}}}{\dot{\bar{\varphi}}} \right) \dot{\Psi} \right. \\
%%%
& & \left. + 2 (1 + f) (1 + h) \left(\dot\H - \H  \frac{\ddot{\bar{\varphi}}}{\dot{\bar{\varphi}}} \right) \Psi
+ \H \left((1 + f) (1 + h)\right)^{\centerdot} \Psi \right] \frac{\H}{4 \pi G \dot{\bar\varphi}^2} \nonumber \\
%%%%
& & + \left(\f{(1+ f)\bar{\nu} }{(1 + f_1) \bar{\alpha}}\right)^{\centerdot} \left(\dot{\Psi} + \H (1 + f) (1 + h) \Psi \right) \frac{\H}{4 \pi G \dot{\bar\varphi}^2} \nonumber
\end{eqnarray}
Eqs.~(\ref{eq:FinPhi}) and (\ref{KGBG}) in the long
wavelength limit imply
\begin{equation}
\dot{\cal R} \simeq \f{1+ f}{1 + f_1}\f{\bar{\nu}}{\bar{\alpha}}
\left[ %\bar{\alpha}^2 \left(\frac{1 + h}{1 + f_3}\right) \Delta \Psi +
 \left( \frac{3 \bar{\alpha}^{\prime} \bar{p}}{\bar{\alpha}} -
\frac{4 \bar{\nu}^{\prime} \bar{p}}{\bar{\nu}} - 3 \frac{f - f_3}{1 + f_3} \right) \dot\Psi
- \dot{h} (1 + f) \Psi
\right]
\frac{\H^2}{4 \pi G \dot{\bar\varphi}^2}
\end{equation}
{}from which we infer that ${\cal R}$ is not conserved on large
scales. Thus, perturbations generated during a single scalar field
epoch are no longer purely adiabatic --- perturbations contain a small
entropic contribution. Although such an effect has been seen earlier
in a class of Lorentz violating models \cite{GaugeInvTransPlanck}, our
example here is the first to show that the primordial perturbations
from inflation are not purely adiabatic and always contain a small
entropic perturbation.
(Note that anomaly-freedom of our equations \cite{ConstraintAlgebra}
ensures that there are no violations of Lorentz symmetry, although the
specific form of symmetries can be quantum corrected.)

\section{Discussion}

Effective constraints lead to quantum corrections in equations of
motion as well as in gauge transformations. Applied to general
relativity, it is not only the dynamics but also the underlying
space-time structure and the notion of covariance which are affected
by quantum corrections. In \cite{ConstraintAlgebra} it was shown that
the full constraint algebra, i.e.\ the set of structure functions, of
canonical quantum gravity changes when quantum corrections of a loop
quantization are included. While no gauge freedom is destroyed in the
anomaly-free quantization used, the algebra is not the classical
hypersurface deformation algebra as originally derived by Dirac for
classical general relativity. The quantum corrected equations used are
generally covariant, but the symmetry type of the underlying
covariance is quantum corrected.
A determination of the full gauge algebra of
quantum gravity would require going beyond the leading perturbative
order, which is not available so far. But the results of
\cite{ConstraintAlgebra} show that quantum corrections in the algebra
must arise.
In particular, this implies that terms in an effective
action of canonical quantum gravity cannot be simply of higher
curvature form. There must be additional effects such as non-local
terms or non-commutative manifold structures.

In \cite{ConstraintAlgebra} as well as in this paper only the inverse
triad type of quantum corrections due to the effects of
\cite{QSDI,QSDV} is considered. There are additional quantum
corrections, one due to the use of holonomies and a generic one due to
quantum back-reaction of fluctuations, correlations and higher moments
of a quantum state \cite{Karpacz}. For these corrections no
anomaly-free version for perturbative inhomogeneities has been found
yet, which indicates that there are severe consistency restrictions
especially for holonomy corrections. (Quantum back-reaction is
generic, such that the existence of consistent deformations is
guaranteed by the work on effective gravity e.g.\ in
\cite{EffectiveNewton,EffectiveGR}.)  Despite this incomplete status
of quantum gravity corrections, the different structures of the three
types of corrections shows that it is not possible to cancel
corrections from one type, such as the inverse triad corrections used
here, by corrections of the other types. Thus, corrections of the form
discussed here must be present in any cosmological perturbation theory
based on loop quantum gravity.

Given anomaly-freedom of the constraints, it is possible to construct
gauge invariant variables and recast the equations of motion in an
entirely gauge invariant manner. Here, this analysis was done for the
perturbative constraints of \cite{ConstraintAlgebra}, incorporating
inverse triad corrections of loop quantum gravity in a way which, to
leading orders, is anomaly-free. The final equations
(\ref{DiffEqFinal}), (\ref{HamEqFinal}), (\ref{GI_KG}),
(\ref{GI_Diag}), and (\ref{GI_OffDiag}):
\begin{eqnarray}
\D_c\left(\dot\Psi+\H(1+f)\Phi\right)&=&4\pi
G\f{\bar{\alpha}}{\bar{\nu}}\dot\varphi \D_c\delta\varphi^{\rm GI}\label{Discussion_Diff}\\
\Delta(\bar{\alpha}^2
\Phi)-3\H(1+f)\left[\dot\Psi+\H\Phi(1+f)\right]&=&4\pi
G\f{\bar{\alpha}}{\bar{\nu}}(1+f_3)\left[\dot{\bar{\varphi}}
\delta\dot\varphi^{\rm
GI}-\dot{\bar{\varphi}}^2(1+f_1)\Phi\nonumber\right.\label{Discussion_Ham}\\
&&+\left.\bar{\nu} \bar{p} V_{,\varphi}(\bar{\varphi})
\delta\varphi^{\rm GI}\right]\\
\ddot\Psi+\H\left[2\dot\Psi\left(1-\f{\bar{\alpha}^\prime
\bar{p}}{\bar{\alpha}}\right)+\dot\Phi(1+f)\right]
&+&\left[\dot\H+2\H^2\left(1+f^\prime \bar{p} -
\f{\bar{\alpha}^\prime \bar{p}}{\bar{\alpha}}\right)\right]\Phi(1+f)\nonumber\label{Discussion_Ray}\\
&=&4\pi
G\f{\bar{\alpha}}{\bar{\nu}}\left[\dot\varphi\delta\dot\varphi^{\rm
GI}-\bar{p}\bar{\nu} V_{,\varphi}(\bar{\varphi})\delta\varphi^{\rm GI}\right]\\
\D_a\D^i(\bar{\alpha}^2\left(\Phi-\Psi(1+h)\right))&=&0\label{Discussion_OffDiag}\\
\delta \ddot \varphi^{\rm GI}+2 \H \delta \dot \varphi^{\rm GI}
\left(1 - \f{\bar{\nu}^\prime \bar{p}}{\bar{\nu}}-g_1^\prime
\bar{p}\right)\!\!\! &-&\!\!\!  \bar{\alpha}^2 %\bar{\nu} \bar{\sigma}
(1-f_3)\Delta \delta \varphi^{\rm GI} +\bar{\nu} \bar{p}
V,_{\varphi\varphi}(\bar{\varphi})\delta \varphi^{\rm GI}
\label{Disscussion_KG}\\
+2\bar{\nu} \bar{p} V,_{\varphi}(\bar{\varphi})(1+f_1) \Phi
\!\!&-&\!\! \dot{\bar{\varphi}}\left[(1+f_1)\dot\Phi
+3(1+g_1)\dot\Psi\right]-2\H\dot{\bar{\varphi}} (f_3^\prime
\bar{p})\Phi=0\nonumber
\end{eqnarray}
collected here from the last two sections are manifestly gauge
invariant and reproduce the classical perturbed Einstein's
equation if one omits the quantum corrections.  Besides gauge
invariance, there is also a consistency issue which arises since,
on general grounds, there are three unknown scalar functions
subject to five equations. Moreover Eqs.~(\ref{Discussion_Diff})
and (\ref{Discussion_OffDiag}) can be used to eliminate two of
these functions in terms of just one, say $\Phi$, which should
satisfy the three remaining equations.

Another perspective on closure of the equations of motion is given
by considering that the Klein-Gordon equation
(\ref{Disscussion_KG}) is not independent. In the covariant
formalism, it results from the energy conservation equation for
the matter field: $\nabla_\mu T^{\mu\nu}=0$, the counterpart of
the Bianchi identity of the gravitational sector. The latter
equation is automatically satisfied by construction of the
Einstein tensor. For this reason, the Klein-Gordon equation can be
expressed in terms of the other equations and their derivatives.
In the canonical formulation such an argument, referring to the
Bianchi identity, is not available, especially at
the effective level, for it is a priori not clear what kind of
action might correspond to the quantum corrected constraints.
Nonetheless, one can use the form of the Bianchi identity as
guidance to explicitly check the redundancy of the Klein-Gordon
equation. In addition, the Hamiltonian constraint equation is
indeed a constraint, which restricts initial data, rather than a
dynamical equation as it does not contain second order time
derivatives. If the constraint is also preserved dynamically it
does not break the consistency of the equations of motion.

In the canonical setting, this is realized if the constraints are
first class, which is the case in the situation at hand to the orders
considered. In fact, closure of the constraint algebra guarantees both
gauge invariance of the equations of motion derived here and their
consistency.  In one of the equations, specifically in the
diffeomorphism constraint equation (\ref{DiffEq}), we have explicitly
shown that all gauge dependent terms mutually cancel. Similar
straightforward but tedious calculations for the other equations can
be performed, but for brevity we did not present them here.  Given the
closure of the constraint algebra, such an explicit demonstration of
vanishing of the gauge terms becomes unnecessary, although it may
still serve as an independent consistency check.

In the light of this, one can arrive at the final gauge invariant
equations of motion using the following shortcut. After computing the
time derivative for the corresponding conjugate momentum using the
Poisson bracket, it is possible to keep only the gauge invariant parts
of the variables, dropping all the gauge terms ($B$, $E$ and
$(\delta\varphi^{\rm GI}-\delta\varphi)$ in our case). Although
setting $B=0=E$ in the metric would amount to the longitudinal gauge,
this procedure is not equivalent to fixing the longitudinal gauge
prior to deriving the equations of motion. By doing so, one would
loose variational equations by off-diagonal metric components and thus
control on the off-diagonal spatial Einstein equation. Without that
equation, the relationship between $\Phi$ and $\Psi$ would remain
undetermined and one could only `borrow' the relation between $\Phi$
and $\Psi$ from the classical picture. The latter has proven to be
incorrect at the effective level, as can be seen from
Eq.~(\ref{GI_OffDiag}). Also effects of counter-terms, required for
anomaly-freedom, could not be seen in a gauge-fixed analysis, which in
general makes such effective equations inconsistent. The same remarks
apply to other gauge choices, such as uniform gauge.

Terms which arise from a complete treatment of all gauge properties,
but which could not be seen in a gauge-fixed analysis, do have
physical implications. As an example, we can derive an evolution
equation for curvature perturbations by subtracting
Eq.~(\ref{HamEqFinal}) divided by $1+f_3$ from Eq.~(\ref{GI_Diag}) in
such a way that matter perturbations are canceled for a stiff fluid
presented by the free scalar in the case $V(\varphi)=0$:
\begin{eqnarray}
&& \ddot{\Psi}-\frac{\Delta(\alpha^2\Phi)}{1+f_3}+ \H \left(\left(2\left(
1-\frac{\alpha'p}{\alpha}\right)+3\frac{1+f}{1+f_3}\right)\dot{\Psi}+
(1+f)\dot{\Phi}\right)\nonumber\\
&+&
\left(
\dot{\H}\frac{2-\frac{\nu'p}{3\nu}}{1-\frac{\nu'p}{3\nu}}+
\H^2\left(2\left(1+f'p-\frac{\alpha'p}{\alpha}\right)+
3\frac{(1+f)^2}{1+f_3}-
\frac{1+\frac{\alpha'p}{\alpha}}{1-
\frac{\nu'p}{3\nu}}\right)\right)\Phi =0\,. \label{FreeModes}
\end{eqnarray}
This equation is similar to (\ref{eq:FinPhi}), but more special
because it was derived assuming a stiff fluid.  For long-wave length
modes one can ignore the Laplacian of $\Phi$ and arrive at an ordinary
differential equation in time for only $\Psi$ if we also use the
relation $\Phi=\Psi(1+h)$. The classical equation would then be solved
by a decaying function $\Psi(\eta)$ as well as a constant mode because
the scale factor for the stiff fluid case satisfies the classical
equation $\dot{\H}+2\H^2=0$ which makes the coefficient of $\Phi$ in
the evolution equation vanish. This conservation of power on large
scales \cite{PowerLargeScales} can be demonstrated for any perfect
fluid by eliminating stress-energy components from the gauge-invariant
equations. It can also be shown to be a direct consequence of the
classical conservation law.

With quantum corrections, however, the coefficient of $\Phi$ does not
cancel exactly and the constant mode disappears. First, the background
equation is now corrected to
\begin{equation}
 \dot{\H}= -2\H^2\left(1-\frac{1}{2}
\frac{(\bar{\alpha}\bar{\nu})'\bar{p}}{\bar{\alpha}\bar{\nu}}\right)
\end{equation}
which follows from a combination of (\ref{FriBG}) and (\ref{RayBG}) in
the case of a vanishing potential. With this, and using our
perturbativity assumptions on the correction functions, the
coefficient of $\Phi$ in (\ref{FreeModes}) is
\begin{equation}
 \H^2\left(2f'\bar{p}+6f-3f_3-\frac{\bar{\alpha}'\bar{p}}{\bar{\alpha}}
 +\frac{5}{3}\frac{\bar{\nu}'\bar{p}}{\bar{\nu}}\right)
\end{equation}
which does not have to vanish even if the anomaly
cancellation conditions are used. This confirms the conclusions of
\cite{InhomEvolve} which initially were based on a gauge-fixed
treatment in longitudinal gauge.
However, here the signs of the correction terms are different (for
instance, $f'<0$ while $f>0$) such that it depends on the regime
whether power is enhanced or suppressed.

The gauge invariant equations of motion can now be used to describe
evolution of the curvature perturbations, e.g. during cosmological
inflation. The small quantum corrections accumulated during a
sufficiently long inflationary phase may potentially lead to
detectable imprints on the surface of last scattering and be observed
in the Cosmic Microwave Background.

\section*{Acknowledgements}

This work was supported in part by NSF grants PHY0748336 and PHY0456913.
MK was supported by NSF grant PHY0114375.
SS wishes to thank Roy Maartens and Kevin Vandersloot for discussions.
He is being supported by the Marie Curie Incoming International Grant
IIF-2006-039205. GH thanks IGC, Penn State where a part of this work
was completed. GH is partially supported by NSERC of Canada.

\section*{Appendix}

\begin{appendix}

In this appendix we present the determination of gauge invariant
equations of motion, done in the main text in the presence of quantum
corrections, for the classical canonical theory. Resulting equations
can be seen to agree with \cite{CosmoPert}.

    \section{Gauge invariant equations of motion}
\label{Sec_EOM}

As we have seen earlier, the background phase space variables as well
as the background lapse and shift remain unchanged under the
infinitesimal gauge transformations (up to the second order).
Therefore the equations of motion governing them are also gauge
invariant. In the perturbed context, both the canonical fields and the
Lagrange multipliers gauge-transform in a non-trivial way, and
equations of motion should be formulated for the gauge-invariant
variables.  Here, we systematically derive the background and
perturbed canonical equations of motion of classical linearized
gravity. We also discuss the equivalence of the canonical system of
equations and Einstein's equation.

        \subsection{Background equations}

Since the background shift vector is zero, the background
diffeomorphism constraint is identically satisfied. Therefore
background equations are generated only by the background
Hamiltonian constraint
\begin{equation}\label{BG_Ham}
H^{(0)}[\bar N]=V_0\bar N\left[-\f{3\sqrt{\bar{p}} \bar{k}^2}{8\pi
G}+\f{\bar\pi^2}{2 \bar{p}^{3/2}}+\bar{p}^{3/2} V(\bar\varphi)\right],
\end{equation}
where $\bar{k}$ and $\bar{p}$ are the background (diagonal) components
of the extrinsic curvature and triad respectively, $\bar\varphi$ and
$\bar\pi$ are the background matter field and its conjugate
momentum. (The parameter $V_0=\int {\rm d}^3x$ is again the coordinate
volume of space or of a compact region in which the perturbations are
introduced.)  The background lapse, $\bar N=a\equiv\sqrt{\bar{p}}$,
corresponds to the conformal time.

The Hamiltonian (\ref{BG_Ham}) gives rise to the constraint
equation
\begin{equation}\label{BG_ConstrEq}
0=\f{\D H^{(0)}[\bar N]}{\D \bar N}=V_0\left[-\f{3\sqrt{\bar{p}}
\bar{k}^2}{8\pi G}+\f{\bar{\pi}^2}{2 \bar{p}^{3/2}}+
\bar{p}^{3/2} V(\bar{\varphi})\right]
\end{equation}
and two pairs of Hamilton's equations of motion
\begin{eqnarray}
\dot{\bar k}&=&\{\bar{k},H^{(0)}[\bar N]\}=-\f{\bar N \bar{k}^2}{2\sqrt{\bar{p}}}
+4\pi G\bar N\left(-\f{\bar{\pi}^2}{\bar{p}^{5/2}}+
\sqrt{\bar{p}}V(\bar{\varphi})\right) \label{BG_kDot}\\
\dot{\bar p}&=&\{\bar{p},H^{(0)}[\bar N]\}=2\bar N \sqrt{\bar{p}}\bar{k}
\label{BG_pDot}
\end{eqnarray}
for the gravitational variables and
\begin{eqnarray}
\dot{\bar{\varphi}}&=&\{\bar{\varphi},H^{(0)}[\bar N]\}=
\f{\bar N \bar{\pi}}{\bar{p}^{3/2}}\label{BG_phiDot}\\
\dot{\bar{\pi}}&=&\{\bar{\pi},H^{(0)}[\bar N]\}=-\bar N \bar{p}^{3/2}
V_{,\varphi}(\bar\varphi) \label{BG_piDot}
\end{eqnarray}
for the matter field. Note that the lapse $\bar N=\sqrt{\bar{p}}$ should
be fixed consistently after computing the Poisson brackets. Then,
as follows from Eq.~(\ref{BG_pDot}), the background extrinsic
curvature is nothing but the conformal Hubble parameter
\be\label{Hubble}%
\bar{k}=\f{\dot{\bar{p}}}{2\bar{p}}\equiv\f{\dot a}{a}=:\H,
\ee
whereas the field momentum is given by
\be\label{pi_phiDot}
\bar\pi=\bar{p}\dot{\bar{\varphi}}\,.
\ee
Using the two relations above in (\ref{BG_ConstrEq}) yields the
Friedmann equation
\begin{equation}
\H^2=\f{8\pi G}{3}\left(\f{\dot{\bar{\varphi}}^2}{2}+
\bar{p}V(\bar{\varphi})\right)\label{FriBG_Cl},
\end{equation}
whose right hand side is proportional to the background matter
energy density. The gravitational equation (\ref{BG_kDot}) should
be recognized as the Raychaudhuri equation
\begin{equation}
\dot\H=\H^2-4\pi G\dot{\bar{\varphi}}^2\label{RayBG_Cl},
\end{equation}
written in conformal time. Finally, combining the matter
Hamilton's equations into a single second order equation results
in the Klein-Gordon equation
\begin{equation}
\ddot{\bar{\varphi}}+2\H\dot{\bar{\varphi}}+
\bar{p}V_{,\varphi}(\bar{\varphi})=0\label{KGBG_Cl}.
\end{equation}
The unusual factor of 2 (instead of the standard one, 3) in the
second term is again due to the choice of conformal time.

        \subsection{Perturbed equations}
Perturbed equations of motion are generated by (the second order
part of) both Hamiltonian and diffeomorphism constraints. Using
the background and perturbed lapse function, the gravitational
Hamiltonian constraint can be written as
\begin{equation}
\label{ClassPertHamConst} H^{(2)}_{\rm grav}[N] = \frac{1}{16\pi
G}\int \mathrm{d}^3x \left[\bar{N} {\mathcal H}^{(2)} +
 \delta N{\mathcal H}^{(1)}\right] \,,
\end{equation}
with the first- and second-order parts of the Hamiltonian density
given by (\ref{PertHamConst_Grav})
%\begin{equation} \label{PertHamConst_Grav}
%{\mathcal H}^{(1)} = -4 \bar{k}\sqrt{\bar{p}} \delta^c_j\delta
%K_c^j -\frac{\bar{k}^2}{\sqrt{\bar{p}}} \delta_c^j\delta E^c_j
%+\frac{2}{\sqrt{\bar{p}}}
%\partial_c\partial^j\delta E^c_j
%\end{equation}
and
\begin{eqnarray}
{\mathcal H}^{(2)} &=& \sqrt{\bar{p}} \delta K_c^j\delta
K_d^k\delta^c_k\delta^d_j - \sqrt{\bar{p}} (\delta
K_c^j\delta^c_j)^2 -\frac{2\bar{k}}{\sqrt{\bar{p}}} \delta
E^c_j\delta K_c^j
\nonumber\\
&& \quad -\frac{\bar{k}^2}{2\bar{p}^{3/2}} \delta E^c_j\delta
E^d_k\delta_c^k\delta_d^j +\frac{\bar{k}^2}{4\bar{p}^{3/2}}(\delta
E^c_j\delta_c^j)^2 -\frac{\delta^{jk}
}{2\bar{p}^{3/2}}(\partial_c\delta E^c_j) (\partial_d\delta
E^d_k)\,,
\end{eqnarray}
respectively. The second order part of the diffeomorphism
constraint reads
\begin{equation} \label{PertDiffConst_Grav}
D^{(2)}_{\rm grav}[N^a] = \frac{1}{8\pi
G}\int_{\Sigma}\mathrm{d}^3x\delta N^c
\left[\bar{p}\partial_c(\delta^d_k \delta K^k_d)
-\bar{p}(\partial_k\delta K^k_c)- \bar{k} \delta_c^k(
\partial_d \delta E^d_k)\right].
\end{equation}
In the matter sector, we similarly have the Hamiltonian constraint
\begin{eqnarray} \label{PertHamConst_Matt}
H^{(2)}_{\rm matter}[{N}] = \int_\Sigma\mathrm{d}^3x \left[\bar{N}
\left({\mathcal H}_\pi^{(2)}+{ \mathcal H}_\nabla^{(2)}+ {\mathcal
H}_\varphi^{(2)}\right)+\delta N \left( {\mathcal
H}_\pi^{(1)}+{\mathcal H}_\varphi^{(1)} \right) \right]
\end{eqnarray}
with the densities (\ref{SFHamConstH1})
%\begin{equation} \label{SFHamConstH1}
%{\mathcal H}_\pi^{(1)} = \frac{\bar{\pi}
%\delta{\pi}}{\bar{p}^{3/2}} -\frac{\bar{\pi}^2}{2\bar{p}^{3/2}}
%\frac{\delta_c^j \delta E^c_j}{2\bar{p}} , \quad {\mathcal
%H}_\nabla^{(1)} = 0 ,\quad {\mathcal H}_\varphi^{(1)} =
%\bar{p}^{3/2}\left( V_{,\varphi}(\bar{\varphi}) \delta\varphi
%+V(\bar{\varphi}) \frac{\delta_c^j \delta E^c_j}{2\bar{p}}\right)
%\end{equation}
and
\begin{eqnarray} \label{SFHamConstH2}
{\mathcal H}_\pi^{(2)} &=&
\frac{1}{2}\frac{{\delta{\pi}}^2}{\bar{p}^{3/2}} -\frac{\bar{\pi}
\delta{\pi}}{\bar{p}^{3/2}} \frac{\delta_c^j \delta
E^c_j}{2\bar{p}} +\frac{1}{2}\frac{\bar{\pi}^2}{\bar{p}^{3/2}}
\left( \frac{(\delta_c^j \delta E^c_j)^2}{8\bar{p}^2}
+\frac{\delta_c^k\delta_d^j\delta E^c_j\delta E^d_k}{4\bar{p}^2}
\right) ~, \nonumber\\
{\mathcal H}_\nabla^{(2)} &=& \frac{1}{2}\sqrt{\bar{p}}\delta^{ab}
\partial_a\delta \varphi \partial_b\delta \varphi \nonumber~,\\
{\mathcal H}_\varphi^{(2)} &=& \frac{1}{2}\bar{p}^{3/2}
V_{,\varphi\varphi}(\bar{\varphi}) {\delta\varphi}^2
+\bar{p}^{3/2} V_{,\varphi}(\bar{\varphi}) \delta\varphi
\frac{\delta_c^j \delta E^c_j}{2\bar{p}}
\nonumber\\
&&\quad + \bar{p}^{3/2} V(\bar{\varphi})\left( \frac{(\delta_c^j
\delta E^c_j)^2}{8\bar{p}^2} -\frac{\delta_c^k\delta_d^j\delta
E^c_j\delta E^d_k}{4\bar{p}^2} \right) ~,
\end{eqnarray}
along with the diffeomorphism constraint
\begin{equation} \label{SFPertDiffConst}
D^{(2)}_{\rm matter}[N^a] = \int_{\Sigma}\mathrm{d}^3x\delta N^c
\bar{\pi}\partial_c\delta \varphi.
\end{equation}
In the expressions above, the triad perturbation has the form
(\ref{Triad}), whereas the perturbations of lapse and shift are
given by (\ref{PertLapseShift}).

%\section{Gauge invariant equations of motion}
Below we formulate the perturbed equations of motion purely in terms
of the gauge invariant variables. The equations, as before, are of two
types:\\ (i) Constraint equations, i.e. the Hamiltonian and
Diffeomorphism constraints and\\ (ii) Dynamical (Hamilton's)
equations. \\ The latter are those for the matter variables (one
second order equation) and for the gravitational variables (two
independent second order equations: diagonal and off-diagonal).  The
following auxiliary relation will be useful for deriving
gauge-invariant equations:
\begin{eqnarray}\label{GaugeAux_Cl}
\delta E^a_i=&-&2\bar{p}\Psi\delta_i^a \nonumber\\
&-&2\H \bar{p}(B-\dot E) \delta_i^a +\bar{p}(\delta_i^a\Delta-\D^a\D_i)E \\
 \delta K_a^i=&-&\delta_a^i\left[\dot\Psi+\H\left(\Psi+\Phi\right)\right]\nonumber\\
&-&\delta_a^i\dot\H(B-\dot E)+\D_a \D^i\left[\H E - (B-\dot
E)\right],\nonumber
\end{eqnarray}
where the first line of each equation contains only gauge
invariant terms.

\subsubsection{Diffeomorphism constraint equation} Varying the smeared
diffeomorphism constraint with respect to the shift perturbation
yields the diffeomorphism constraint equation (the space-time
Einstein equation):
\begin{equation}
0=8\pi G\f{\delta D[\delta N^c]}{\delta (\delta
N^c)}=\bar{p}\left(\D_c(\delta K_a^i\delta_i^a)-\D_k(\delta
K_c^k)\right)-\bar{k}\D_d\delta E^d_k \delta_c^k+8\pi G \bar{\pi}
\D_c\delta\varphi.
\end{equation}
Using the gauge invariant variables defined in
(\ref{Bardeen}) and (\ref{GI_varphi_Cl}), this
equation can be rewritten as
\begin{equation}\label{DiffEqClass}
\D_c\left[\dot\Psi+\H\Phi-4\pi G\dot{\bar{\varphi}}\delta\varphi^{\rm
GI}\right]+({\rm gauge \, \,terms})=0,
\end{equation}
where the `gauge terms' are
\[
2(B-\dot E)\D_c\left[-\dot\H+\H^2-4\pi G \dot{\bar{\varphi}}^2\right]
\]
The expression inside the square brackets is nothing but the
background Raychaudhuri equation (\ref{RayBG_Cl}). Thus all gauge
dependent terms vanish and the diffeomorphism constraint equation
takes the form
\begin{equation}\label{DiffEqFinalClass}
\D_c\left[\dot\Psi+\H\Phi\right]=4\pi G\dot{\bar{\varphi}}
\D_c\delta\varphi^{\rm GI},
\end{equation}
whose right hand side should be recognized as
the gauge invariant space-time component of the perturbed matter
stress-energy tensor $-4\pi G a^2 \delta T_S^T$.

\subsubsection{Hamiltonian constraint equation}

%We should remark
%that as soon as the constraint algebra is proven to close gauge
%invariance of equations of motion is guaranteed. At the same time,
%showing that the gauge dependent terms of a given equation do
%vanish is rather tedious, although straightforward, an exercise.
%For brevity we leave out such derivations in this and the
%following sections.

The Hamiltonian constraint equation is obtained by variation with
respect to the lapse perturbation:
\begin{eqnarray}
\f{\delta H}{\delta (\delta N)}&=&\f{1}{16\pi G}\left[-4
\bar{k}\sqrt{\bar{p}}(\delta K_a^i \delta_i^a)-
\f{\bar{k}^2}{\sqrt{\bar{p}}}\delta
E_i^a\delta_a^i+\f{2}{\sqrt{\bar{p}}}\D_a\D^i\delta
E^i_a\right]\nonumber\\
&&+\f{\bar\pi\delta\pi}{\bar{p}^{3/2}}-\left(\f{\bar\pi^2}{2\bar{p}^{3/2}}
-\bar{p}^{3/2}V(\bar\varphi)\right)\f{\delta E_i^a
\delta^i_a}{2\bar{p}}+\bar{p}^{3/2}V_{,\varphi}(\bar\varphi)
\delta\varphi\nonumber\\
&=&0\,.
\end{eqnarray}
Dividing both sides by $\sqrt{\bar{p}}$ allows one to replace the
background extrinsic curvature with the Hubble rate. Then
eliminating the field momentum and its perturbation in terms of
the time derivatives of the scalar field (see
(\ref{Pert_EoM_varphi})) and using the auxiliary expressions
(\ref{GaugeAux_Cl}), one arrives at the gauge invariant Hamilton
constraint equation (perturbed Friedmann equation)
\begin{eqnarray}\label{HamEqFinalClass}
\Delta\Phi-3\H\left[\dot\Psi+\H\Phi\right]&=&4\pi
G\left[\dot{\bar\varphi}\delta\dot\varphi^{\rm GI}-\dot{\bar\varphi}^2\Phi+
\bar{p}
V_{,\varphi}(\bar\varphi)\delta\varphi^{\rm GI}\right]
\end{eqnarray}
Again, the right hand side is nothing but the time-time
component of the perturbed stress-energy tensor.

\subsubsection{Hamilton's equations}

%As mentioned before, each pair
%of Hamilton's equations should be combined into a single second
%order equation. We illustrate the procedure starting with the
%matter field.
The perturbed dynamical matter equations are computed using the
Poisson bracket, giving rise to (\ref{Pert_EoM_varphi}) and
(\ref{Pert_EoM_pi}).
Expressing $\delta\pi$ from the first equation and substituting
into the second one, with the help of the background equations,
the Klein-Gordon equation can be cast in the gauge
invariant form
\bq\label{GI_KG_Cl}%
\delta \ddot \varphi^{\rm GI}+ 2 \H \delta \dot \varphi^{\rm GI} -
\Delta \delta \varphi^{\rm GI} + \bar{p}
V,_{\varphi\varphi}(\bar{\varphi})\delta \varphi^{\rm GI}+2 \bar{p}
V,_{\varphi}(\bar{\varphi}) \Phi
-\dot{\bar{\varphi}}\left(\dot\Phi+3\dot\Psi\right)=0\,.
\eq%
Similarly one can arrive at the spatial Einstein equations. The
first order equations are given by
\begin{eqnarray}
 \delta \dot K_a^i &\equiv& \{\delta K_a^i,
H^{(2)}[N]+D^{(2)}[N^a]\}\nonumber\\&=&\f{\bar
N}{\bar{p}^{3/2}}\left[-\bar{k}\bar{p}\delta K_a^i-
\f{\bar{k}^2}{2}\delta E_k^d
\delta^i_d\delta_a^k+\f{\bar{k}^2}{4}\delta E_k^d\delta_d^k\delta_a^i+
\f{\delta^{ik}}{2}\D_a\D_d\delta E_k^d
\right]\label{Pert_EoM_delK} \\
&+&\f{\delta N
\bar{k}^2}{2\sqrt{\bar{p}}}\delta_a^i-
\f{1}{\sqrt{\bar{p}}}\D_a\D^i\delta N+8\pi G
\f{\delta H^{(2)}_m[N]}{\delta(\delta E_i^a)}\nonumber\\
\delta \dot E_i^a &\equiv& \{\delta E_i^a,
H^{(2)}[N]+D^{(2)}[N^a]\}\label{Pert_EoM_delE}\\
&=&\f{\bar N}{\sqrt{\bar{p}}}\left[\bar{p}\delta
K_c^j\delta_i^c\delta_j^a-\bar{p}(\delta
K_c^j\delta_j^c)\delta_i^a-\delta E_i^a\right] +2\delta N
\bar{k}\sqrt{\bar{p}}\delta_i^a+\bar{p}\left(\delta_i^a\D_c\delta N^c-
\D_i\delta N^a\right)\nonumber
\end{eqnarray}
The combined second order  equation naturally decouples into two
independent equations: diagonal and off-diagonal. After a tedious
but rather straightforward computation, taking into account the
background equations of motion, the former equation takes the
form
%\begin{eqnarray}\label{GI_Diag}
%\ddot\Psi&+&\f{3}{2}\H^2\left[\Psi+\Phi\left(1+\f{2}{3}(h+2f^\prime
%p)\right)\right]+\H\left[2\dot\Psi\left(1-\f{2\alpha^\prime}{\alpha}\right)+\dot\Phi(1+f)\right]\nonumber\\
%&=&4\pi
%G\f{\alpha}{\nu}\left[\dot\varphi^2\left(3\Psi(1+g_3)-2\Phi(1-\f{f_3}{2})\right)+\dot\delta\varphi^{\rm
%GI}+p\nu V_{,\varphi}(\varphi)\left(\delta\varphi^{\rm
%GI}+\Phi-\Psi\right)\right].\nonumber\\
%\end{eqnarray}
\begin{eqnarray}\label{GI_Diag_Cl}
\ddot\Psi+\H\left(2\dot\Psi+\dot\Phi\right)
+\left(\dot\H+2\H^2\right)\Phi=4\pi
G\left(\dot{\bar{\varphi}}\delta\dot\varphi^{\rm GI}-\bar{p}
V_{,\varphi}(\varphi)\delta\varphi^{\rm GI}\right)\,.
\end{eqnarray}
In the absence of anisotropic stress in the matter sector, which
is the case for the scalar field, the gauge invariant part of the
off-diagonal equation reads:
\begin{equation}\label{GI_OffDiag_Cl}
\D_a\D^i\left[\Phi-\Psi\right]=0,
\end{equation}
which implies $\Phi=\Psi$ for the Bardeen potentials.

%\end{appendix}
%\begin{appendix}

\section{Anomaly Cancellation Conditions}
\label{App:Anomaly}
In this appendix, we summarize the set of anomaly cancellation
conditions containing counter-term coefficients. These conditions, in
turn, determine the coefficients in terms of primary quantum
correction functions $\alpha$, $\nu$ and $\sigma$, and they impose
restrictions on the primary corrections. We note from the expressions
of quantum corrected Hamiltonian densities (\ref{HamGravQDens}),
(\ref{HamMatterQDens}) that there are three such functions ($f$, $g$,
$h$) in the gravitational sector, six ($f_1$, $f_2$, $g_1$, $g_2$,
$g_3$, $g_5$) in the kinetic sector and two ($f_3$, $g_6$) in the
potential sector of scalar matter. Thus for the system under
consideration we have a total of eleven initially undetermined
functions contained in the counter-terms. These free functions are
constrained by anomaly cancellation.

Invariance of counter-terms under diffeomorphisms
\cite{ConstraintAlgebra} led to four conditions
\begin{equation} \label{FinalConditionsDiffeo}
g = -2f  ~,~ f_2 = 2 f_1 ~,~ g_2=g_1 ~,~ g_3=2g_2 = 2g_1
\end{equation}
among these coefficients.
These equations trivially lead to the solutions for $g$, $f_2$, $g_2$
and $g_3$, leaving seven functions to be determined.
Cancellation of anomaly terms from the Poisson bracket between
Hamiltonian constraints led to two independent conditions
 ($\mathcal{G}_1^{\varphi}=0$, $\mathcal{G}_2^{\varphi}=0$ of
\cite{ConstraintAlgebra}) from the gravitational sector. These two conditions
along with the equation (\ref{FinalConditionsDiffeo}) imply
\begin{eqnarray} \label{FinalAnoCanGravEqn1}
h &=& -f+ 2 \frac{\bar{\alpha}'\bar{p}}{\bar{\alpha}} ~,\\
\label{FinalAnoCanGravEqn2}
2 \bar{p}f' &=& -\frac{\bar{\alpha}'\bar{p}}{\bar{\alpha}}~.
\end{eqnarray}
These two equations explicitly solve $f$ and $h$ in terms of the
primary correction function $\alpha$. In particular, for the given
form of $\bar\alpha =1+c_{\alpha}\left(\ell_{\rm
P}^2/\bar{p}\right)^{n_{\alpha}}+\cdots$,
(\ref{FinalAnoCanGravEqn2}) has the solution
\begin{equation}\label{FSol}
f = \frac{1}{2 n_{\alpha}} \frac{\bar{\alpha}'\bar{p}}{\bar{\alpha}}
\end{equation}
Thus, there are only five remaining functions that need to be
determined. Three of the matter anomaly cancellation conditions
($\mathcal{D}_1=0$, $\mathcal{D}_2=0$, $\mathcal{B}_4^{\varphi}=0$ of
\cite{ConstraintAlgebra}) lead to
\begin{eqnarray} \label{FinalAnoCanSFEqn1}
g_5 &=& - f_1 ~,~  g_6 = - g_1 ~,~ f_3 = f_1 -  g_1
\end{eqnarray}
whereas another two conditions
($\mathcal{B}_1^{\varphi}=0$, $\mathcal{B}_2^{\varphi}=0$
of \cite{ConstraintAlgebra}) solve $f_1$ and $g_1$ explicitly as
\begin{eqnarray} \label{FinalAnoCanSFEqn2}
f_1 &=& f -  \frac{\bar{\nu}'\bar{p}}{3\bar{\nu}} ~,\\
g_1 &=& \frac{\bar{\alpha}'\bar{p}}{3\bar{\alpha}}
-  \frac{\bar{\nu}'\bar{p}}{\bar{\nu}}
+  \frac{2}{9}\left(\frac{\bar{\nu}'\bar{p}}{\bar{\nu}}\right)^{'}p ~.
\end{eqnarray}
At this stage all counter-terms coefficients have been
determined. There is one remaining anomaly cancellation condition from the
matter sector, $\mathcal{D}_4 = 0$ of \cite{ConstraintAlgebra}, which
implies
\begin{equation} \label{f3diff}
 2\bar{p}f_3'+3f_3-3f=0
\end{equation}
 and thus requires the primary correction functions to satisfy
\begin{equation}\label{alphanuRelation}
\f{\bar\alpha^\prime\bar p}{\bar\alpha} +\f{\bar
p}{3}\left(\f{\bar\alpha^\prime\bar p}{\bar
\alpha}\right)^\prime-\f{\bar\nu^\prime\bar p}{\bar \nu} -\f{\bar
p}{9}\left(\f{\bar\nu^\prime\bar p}{\bar
\nu}\right)^\prime+\f{2\bar p^2}{9}\left(\f{\bar\nu^\prime\bar
p}{\bar \nu}\right)^{\prime\prime}=0.
\end{equation}
Independent of the counter-terms and the the requirement
(\ref{alphanuRelation}), we also have the relation
$\bar\alpha^2=\bar\nu\bar\sigma$ to be satisfied by the primary
correction functions. Thus, anomaly freedom of the constraint algebra
severely restricts the allowed form of primary quantum corrections
functions, but it does permit non-trivial forms of quantum
corrections.
\end{appendix}

%\bibliographystyle{../preprint}
%\bibliography{../Bib/QuantGra}

\end{document}